\documentclass[3p,times]{elsarticle}

\usepackage[utf8]{inputenc}
\usepackage{amssymb, amsthm, amsmath}
\usepackage[english]{babel}
\usepackage[T1]{fontenc}
 \usepackage{graphics}
 \usepackage{graphicx}
\usepackage{mathtools}
\usepackage[font=small,labelfont={normal, bf}]{caption}
\usepackage{subcaption}
\usepackage{color}

\usepackage[normalem]{ulem}

\usepackage[linesnumbered,ruled,vlined]{algorithm2e}

\newtheorem{theorem}{Theorem}
\newtheorem{proposition}{Proposition}
\theoremstyle{definition}
\newtheorem{definition}{Definition}

\usepackage{hyperref}
\hypersetup{
colorlinks=true
}
\usepackage{epstopdf}
\makeatother
\makeatletter
\newcommand{\eqdef}{\mathrel{\mathop:}=}
\makeatother

\biboptions{sort&compress}

\journal{Journal of Computational and Applied Mathematics}
\title{Cyclic fractional Gaussian noise: time and frequency domain properties}
\author[inst1]{Hubert Woszczek \corref{cor1}}
\author[inst1]{Agnieszka Wyłomańska}

\cortext[cor1]{Corresponding author. Email: hubert.woszczek@pwr.edu.pl}
 \affiliation[inst1]{organization={Faculty of Pure and Applied Mathematics, Hugo Steinhaus Center, Wroclaw University of Science and Technology},
            addressline={Hoene-Wronskiego 13c},
            city={Wroclaw},
            postcode={50-376},
            country={Poland}}
                      
\date{}
\begin{document}

\begin{frontmatter}



\title{}

\begin{abstract}
 This article introduces cyclic fractional Gaussian noise (cfGn), a stochastic model that integrates second-order cyclostationarity with long-range dependence property. While classical cyclostationary processes are widely discussed in the literature, they often lack the capacity to account for the persistent, slow-decaying correlations found in complex empirical data. To bridge this gap, we extend the amplitude-modulated stationary framework by utilizing increments of two-dimensional fractional Brownian motion (2d fBm) as the underlying driving process. The proposed cfGn model is constructed by summing two components, which include periodic deterministic functions modulating the univariate coordinates of 2d fGn. We provide a rigorous derivation of the considered model’s properties, specifically the autocovariance function (ACVF) and frequency-domain characteristics, including the cyclic spectrum. Through theoretical considerations of asymptotic properties and Monte Carlo simulations, we demonstrate that cfGn preserves periodic behavior of ACVF while inheriting long-memory traits which is manifested in time and frequency domains. This framework offers a robust foundation, for instance, in signal-based condition monitoring in systems where periodic fault components coexist with long-range dependent background noise.
\end{abstract}\begin{keyword} cyclostationarity \sep long-memory \sep fractional Brownian motion \sep two-dimensional fractional Brownian motion \sep autocovariance function \sep cyclic spectrum  \sep asymptotic properties


\end{keyword}

\end{frontmatter}

\section{Introduction}
This article presents a model integrating two stochastic frameworks: cyclostationarity and long-range dependence (also called long-memory). Thus, let us first discuss such two notions. Cyclostationary processes are defined by periodically time-varying statistics, with particular emphasis on second-order cyclostationarity, where the autocovariance function (ACVF) exhibits periodic behavior. Such models are prevalent in mechanical systems \cite{mech1,antoni2004cyclostationary}, hydrology \cite{hyd1,hyd2}, climatology \cite{met1,met2}, and finance \cite{broszkiewicz2004detecting}. The theoretical groundwork for second-order cyclostationarity was established by Guzdenko and Gladyshev \cite{Guzdenko_1959_Antonio,gladyshev} and subsequently formalized by Hurd \cite{Hurd_1969_Antonio} and Gardner \cite{Gardner_1972a_Antonio}. Since the late 1960s, extensive literature has refined  analytical methods for cyclostationary processes. Methodologically, the analysis of cyclostationary models typically involves examining ACVF in both time and frequency domains, as detailed by Napolitano \cite{napolitano1} and Hurd \cite{hurd2007periodically}. This framework is particularly suited for finite-variance processes and is heavily based on cyclic spectral analysis. A primary diagnostic tool in this field is spectral coherence, which has been extended to applications in condition monitoring and local damage detection \cite{antoni2007cyclic,csc2}.

Various models exhibiting cyclostationarity are well-established in the literature. The most classical examples include periodic autoregressive moving average (PARMA) models, which include periodic coefficients in the ARMA systems  \cite{hurd2007periodically,vecchia1985periodic}, and amplitude-modulated stationary models \cite{gardner2006cyclostationarity}, which are defined as the product of a deterministic periodic function and a wide-sense stationary process. The model proposed in this work represents an extension of the latter class.

The second concept utilized in this article is the long-range dependence property. In practice, this property means that the dependence between observations of a given stochastic process decays extremely slowly as the time distance between them increases. Mathematically, this property is defined through the behavior of the ACVF or the spectral density of a given process. The classical example of the stochastic process with long-range dependence is the fractional Brownian motion (fBm), which was originally introduced by Kolmogorov as "Wiener spirals" \cite{kol140}. The FBm process gained widespread recognition following the seminal work of Mandelbrot and Van Ness \cite{Mandelbrot1968}, who provided its explicit integral representation. FBm is the only Gaussian self-similar process characterized by stationary, power-law correlated increments (called fractional Gaussian noise, fGn) \cite{beran} and its stochastic behavior is uniquely determined by the Hurst exponent. The utility of fBm spans a wide range of scientific disciplines, including hydrology \cite{https://doi.org/10.1029/97WR01982,BENSON2013479}, telecommunications and signal processing \cite{8246429,7448970,984735}, image analysis \cite{605414,6879494}, economics and finance \cite{ROSTEK201330,10.1063/5.0054119,XIAO2010935}, and biological systems, particularly in single-particle tracking experiments \cite{weiss2012,franosch2013,metzler2014,szarek2022statistical,gleb2019}. 

The literature also considers multidimensional versions of fBm, where the individual univariate coordinates are classical fBms that may be linked by a correlation coefficient. The theoretical framework governing multivariate fBm has been investigated extensively; significant contributions, particularly by Amblard et al. \cite{amblard2013basic} and associated studies \cite{coeurjolly2010multivariate, lavancier2009covariance, MAEJIMA1994139}, provide a comprehensive synthesis of the properties of vector-valued, self-similar Gaussian processes with stationary increments. In these contexts, multivariate fBm is generally defined by employing matrix-valued self-similarity exponents alongside matrix-valued covariance structures, where inter-coordinate dependencies are established through constrained cross-covariance parameters. We specifically refer the readers to the recent work of Balcerek et al. \cite{Balcerek2026}, which proposes a construction for a two-dimensional fBm (2d fBm) by introducing a dependence on the underlying noise within the time representation of the process. Balcerek et al. in  \cite{Balcerek2026} discuss also two versions of 2d fBm, well-balanced and causal, which may differ depending on the assumed scenario. 

The model proposed in this article, which we term cyclic fractional Gaussian noise (cfGn), represents an extension of the previously discussed amplitude-modulated stationary process. Specifically, the new model is constructed by integrating two components of an amplitude-modulated framework, each defined as the product of a specific periodic deterministic function and a univariate coordinate of the 2d fGn introduced in \cite{Balcerek2026}. The final process is obtained by summing these two amplitude-modulated components. While the resulting model remains a classical cyclostationary process with a periodically time-varying ACVF, it also possesses the long-memory property for large time lags, a characteristic inherited directly from the underlying 2d fGn. 

Let us note that although this specific configuration mentioned above is novel, the fundamental methodology of constructing similar stochastic processes is well-supported by analogous approaches in the existing literature.  The first appearance of cyclostationary long-memory processes
should be at least dated from the celebrated paper by Hosking \cite{HOSKING1981}, where he considered fractional differentiation of long-memory processes. This construction became known as Gegenbauer process; see review \cite{Dissanayakeetal2018}. The construction of a continuous-time process, which after discretization receives properties similar to Gegenbauer process, is discussed in \cite{anh2004continuous}. Another approach used to design cyclic long-memory processes is to use construction known as random modulation, see, e.g. \cite{papoulis1983random}. Recently, decoupling, modulation, and modeling with such processes have been studied in \cite{kechagias2024cyclical}. Such models are also known as fractional sinusoidal waveform processes \cite{maddanu2022modelling, proietti2024modelling}.

Our contribution with respect to the existing literature discussed above lies in constructing a random modulation process with 2d fGn as the driving process.  In this study, we examine the primary properties of the introduced model expressed through the ACVF. More precisely, we derive the explicit form of the ACVF for both the well-balanced and causal versions of the new model, alongside selected frequency-domain characteristics, such as cyclic frequencies, cyclic autocorrelation functions corresponding to them, and cyclic spectrum. Furthermore, by analyzing the asymptotic behavior of the statistics examined, we demonstrate the unique properties of cfGn. Theoretical considerations are supported by Monte Carlo simulations.

This work serves as a foundational framework for analyzing a new class of models, for instance, in the context of identifying cyclostationary behaviors that coexist with long-memory properties of the signal, or detecting structural changes related to shifts in cyclostationarity in the presence of long-range dependence. Such models and behaviors may manifest themselves for instance in vibration signals from rotating machinery, where cyclostationary patterns typically arise from local faults. A change in the machine's operational state can subsequently alter the cyclostationary characteristics of the signals. In these scenarios, long-range dependence may be associated with additional components that naturally occur in empirical signals.

The rest of the paper is organized as follows. In Section \ref{sec2} we recall the definition of the second-order cyclostationary process along with the definitions of 2d fBm and 2d fGn. In addition, we recall the basic properties of the discussed two-dimensional  processes.  In Section \ref{sec3} we define cfGn and analyze its main properties in time and frequency domain. In Section \ref{sec4}, the asymptotic properties of cfGn are discussed, while in Section \ref{sec5} we verify the theoretical results using Monte Carlo simulations. 
\section{Preliminaries}\label{sec2}
In this section, we recall the definition of a second-order cyclostationary process and present an example process that fulfills this definition. In the second part, we recall the definition of 2d fBm  and its properties. These two notions, i.e.  cyclostationarity and 2d fBm, will then be used to construct a new process that possesses both the cyclostationary property and long-memory behavior.
\subsection{Second-order cyclostationary processes} 
\begin{definition}\cite{hurd2007periodically} {(second-order cyclostationary process)}
The stochastic process  $\{Y(t)\}_{t\ge0}$ is second-order cyclostationary with period $T>0$ if its expected value and autocovariance function (ACVF) are periodic in $t$ with period $T$, i.e., when the following conditions hold for any $s,t \ge0$ 
\begin{eqnarray}
\mathbb{E}[Y(t)]=\mathbb[{E}Y(t+T)],~~ \gamma_Y(s,t)=Cov(Y(s),Y(t))=\gamma_Y(s+T,t+T),
\end{eqnarray}
where $Cov(Y(s),Y(t))=\mathbb{E}[Y(s)Y(t)]-\mathbb{E}[Y(s)]\mathbb{E}[Y(t)]$ is the ACVF of the process $\{Y(t)\}_{t\ge0}$. Here $T$ is the smallest value which satisfies the above property.
\end{definition}
In this article, for simplicity,  the second-order cyclostationary process is called a cyclostationary process. One of the most common examples of the  cyclostationary process is the amplitude-modulated stationary process  defined as follows
\begin{eqnarray}\label{eq1}
    Y(t)=f(t)Z(t),~t\ge0,
\end{eqnarray}
where $f(\cdot)$ is a periodic function with period $T>0$ and $\{Z(t)\}_{t\ge0}$ is the finite-variance stationary process. The classical example of $\{Z(t)\}_{t\ge0}$ process used in Eq. (\ref{eq1}) is the white noise model \cite{gardner2006cyclostationarity}; however, in the literature there are also considered other stationary models used in this definition \cite{napolitano1}. In this article, we extend the classical model defined in Eq. (\ref{eq1}) incorporating possible components of the two-dimensional stationary process, which is discussed in the next part.
\subsection{Two-dimensional fractional Brownian motion and fractional Gaussian noise}\label{2dfBm}
First, we recall definition and important facts related to the 2d fBm and 2d fGn. There are two ways to construct 2d fBm. Let us first introduce the function 
\begin{equation}
    f_{\pm}(x;t,\beta) = (t-x)_{\pm}^{\beta} - (-x)_{\pm}^{\beta},
\end{equation}
where $(x)_+ = \max\{x,0\}$. Let us consider two independent Brownian motions $\{B_1(t)\}_{t\ge0}$, $\{B_2(t)\}_{t\ge0}$ on the real line and define
\begin{equation}
    \begin{cases}
        \tilde{B}_1(t) = B_1(t)\\
        \tilde{B}_2(t) = \rho B_1(t) + \sqrt{1-\rho^2}B_2(t)
    \end{cases}
\end{equation}
Now, we recall the definitions of the so-called "causal" and "well-balanced" 2d fBm.
\begin{definition}\cite{Balcerek2026} {(Causal 2d fBm)}
    Let $H_1,\, H_2\in(0,1)$ and $|\rho|\leq1$. Causal 2d fBm $\{X(t)\}_{t\ge0}$ is defined by 
    \begin{equation}
        X(t) = \int_{\mathbb{R}}\begin{bmatrix}
            \sigma_1 a_{H_1}f_+(s;t,H_1-1/2) & 0\\
            0 & \sigma_2 a_{H_2}f_+(s;t,H_2-1/2)
        \end{bmatrix}
        \begin{bmatrix}
            1 & 0\\
            \rho & \sqrt{1-\rho^2}
        \end{bmatrix}
        \begin{bmatrix}
            dB_1(s)\\
            dB_2(s)
        \end{bmatrix},
    \end{equation}
    where $\sigma_1,\sigma_2>0$. Constants $a_{H_j}, \, j=1,2$ are non-negative and are chosen in such a way that variances of marginals $\{X_j(t)\}_{t\ge0}$ at time $t=1$ are equal to $\sigma_j^2$, i.e.,
    \begin{equation}\label{aconstant}
        a_{H_j}^2 = \frac{\Gamma(2H_j+1)\sin(H_j \pi)}{\Gamma^2(H_j+1/2)}.
    \end{equation}
\end{definition}
\begin{definition}\cite{Balcerek2026} {(Well-balanced 2d fBm)}
    Let $H_1,\, H_2\in(0,1)$ and $|\rho|\leq1$. The Well-balanced 2d fBm $\{X^*(t)\}_{t\ge0}$ is defined as follows 
    \begin{equation}
        X^*(t) = \int_{\mathbb{R}}\begin{bmatrix}
            g_1(s;t) & 0\\
            0 & g_2(s;t)
        \end{bmatrix}
        \begin{bmatrix}
            1 & 0\\
            \rho & \sqrt{1-\rho^2}
        \end{bmatrix}
        \begin{bmatrix}
            dB_1(s)\\
            dB_2(s)
        \end{bmatrix},
    \end{equation}
    where 
    \begin{equation}
        g_j(s;t) = \sigma_ja_{H_j}^*(f_+(s;t,H_j-1/2)+f_-(s:tH_j-1/2))
    \end{equation}
    for $s\in\mathbb{R}$, $t\ge0$, $j=1,2$, $\sigma_1,\sigma_2>0$ and the constants $a_{H_j}^*$ are given by
    \begin{equation}\label{astarconstant}
        a_{H_j}^{*2} = \frac{2H_j(1-2H_j)\pi}{8\Gamma(2-2H_j)\cos(H_j\pi)\Gamma^2(H_j+1/2)\cos^2(\pi(H_j-1/2)/2)},
    \end{equation}
    to ensure that the variances of the marginals $\{X_j(t)\}_{t\ge0}$ at time $t=1$ are equal to $\sigma_j^2$.
\end{definition}
Now, let us recall the covariance structure of 2d fBm. For convenience by $\{Z(t)\}_{t\geq0}$ we denote both causal and well-balanced 2d fBm.
\begin{theorem}\label{th1}\cite{Balcerek2026}
    Let $H_1,\, H_2\in(0,1)$ and $|\rho|\leq1$. The covariance structure of 2d fBm $\{Z(t)\}_{t\geq0}$ is given by
    \begin{equation}
        \gamma_{jk}(s, t) = Cov(Z_j(t), Z_k(s)) = \frac{\sigma_j\sigma_k}{2}(w_{jk}(t)|t|^{H_j+H_k} + w_{jk}(-s)|s|^{H_jH_k} - w_{jk}(t-s)|t-s|^{H_j+H_k}),
    \end{equation}
    for $s,t\ge0$, where $\sigma_j^2 = \mathbb{E}[(Z_j(1))^2]$ and
    \begin{equation}\label{wfunction}
        w_{jk}(u) = \begin{cases}
            \rho_{jk} - \eta_{jk}\text{sign}(u), \quad H_j+H_k\ne1,\\
            \rho_{jk} - \eta_{jk}\text{sign}(u)\log|u|, \quad H_j+H_k\ne1.
        \end{cases}
    \end{equation}
The so called cross correlation parameters $\rho_{12}$ and $\rho_{21}$ are given by
\begin{equation}
    \rho_{12} = \rho_{21} = \rho\frac{\sqrt{\Gamma(2H_1+1)\Gamma(2H_2+1)\sin(H_1\pi)\sin(H_2\pi)}}{\Gamma(H_1+H_2+1)\cos((H_1+H_2)\pi/2)}\cos((H_2-H_1)\pi/2),
\end{equation}
and $\rho_{11}=\rho_{22}=1$ while the assymetry parameters $\eta_{jk}, \, j,k=1,2$, depend on the choice of the model, and in the causal 2d fBm ($Z(t)=X(t)$) they are equal to
\begin{equation}
    \eta_{12}=-\eta_{21}=\rho\frac{\sqrt{\Gamma(2H_1+1)\Gamma(2H_2+1)\sin(H_1\pi)\sin(H_2\pi)}}{\Gamma(H_1+H_2+1)\cos((H_1+H_2)\pi/2)}\sin((H_2-H_1)\pi/2),
\end{equation}
and $\eta_{11}=\eta_{22}=0$, while for well-balanced 2d fBm ($Z(t)=X^*(t)$), we have
\begin{equation}
    \eta_{11}=\eta_{22}=\eta_{12}=\eta_{21}=0.
\end{equation}
\end{theorem}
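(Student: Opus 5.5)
The covariance structure follows from the It\^o isometry for Wiener integrals with respect to the two-dimensional Brownian motion $(B_1,B_2)$. Because the mixing matrix is lower triangular, $Z_1(t)=\int_{\mathbb{R}} h_1(x;t)\,dB_1(x)$ and $Z_2(t)=\int_{\mathbb{R}} h_2(x;t)\,(\rho\,dB_1(x)+\sqrt{1-\rho^2}\,dB_2(x))$. Here $h_j$ is either $\sigma_j a_{H_j} f_+(\cdot;t,H_j-1/2)$ or $g_j(\cdot;t)$. Since $B_1,B_2$ are independent, the isometry gives
\begin{equation*}
Cov(Z_j(t),Z_k(s)) = c_{jk}\int_{\mathbb{R}} h_j(x;t)h_k(x;s)\,dx ,
\end{equation*}
where $c_{jj}=1$ and $c_{12}=c_{21}=\rho$. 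The diagonal case $j=k$ should reduce to the classical Mandelbrot--Van Ness computation. The constants $a_{H_j}$ and $a^*_{H_j}$ are chosen precisely so that $\mathbb{E}Z_j(1)^2=\sigma_j^2$, and the integral gives the usual fBm covariance $\frac{\sigma_j^2}{2}(|t|^{2H_j}+|s|^{2H_j}-|t-s|^{2H_j})$. This is consistent with $\rho_{jj}=1$ and $\eta_{jj}=0$.

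For the cross term I will not compute the time-domain integral directly. I plan to pass to the Fourier side using Plancherel. The Fourier transform of $f_\pm(\cdot;t,\beta)$ is known. Up to a constant, it equals $\frac{e^{it\xi}-1}{i\xi}\,\Gamma(\beta+1)\,|\xi|^{-\beta}e^{\mp i\pi\beta\,\mathrm{sign}(\xi)/2}$, where $\beta=H-1/2$. The cross integral therefore becomes
\begin{equation*}
\int_{\mathbb{R}} \frac{(e^{it\xi}-1)(e^{-is\xi}-1)}{\xi^2}\,|\xi|^{1-H_1-H_2}\,e^{i\theta\,\mathrm{sign}(\xi)}\,d\xi,
\end{equation*}
with phase $\theta=\pm\pi(H_1-H_2)/2$ in the causal case. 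In the well-balanced case the two phases cancel, since $f_++f_-$ has a real, even transform proportional to $\cos(\pi\beta/2)|\xi|^{-\beta}$, so $\theta=0$. Expanding $(e^{it\xi}-1)(e^{-is\xi}-1)=e^{i(t-s)\xi}-e^{it\xi}-e^{-is\xi}+1$ reduces the problem to the generalized integrals $\int(1-e^{iu\xi})|\xi|^{-1-\alpha}e^{i\theta\,\mathrm{sign}\xi}\,d\xi$ with $\alpha=H_1+H_2$. For $\alpha\neq1$ these equal $C|u|^{\alpha}\bigl(\cos\theta\cos(\pi\alpha/2)-\sin\theta\sin(\pi\alpha/2)\,\mathrm{sign}(u)\bigr)\Gamma(-\alpha)$-type expressions. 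Grouping the real (even) part gives $\rho_{12}$ and the odd part gives $\eta_{12}\,\mathrm{sign}(u)$. This reproduces the three-term structure $w(t)|t|^\alpha+w(-s)|s|^\alpha-w(t-s)|t-s|^\alpha$, and the sign convention on $s$ explains the argument $-s$.

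The main obstacle is the bookkeeping of constants. I need to show that $\rho\,a_{H_1}a_{H_2}\Gamma(H_1+1/2)\Gamma(H_2+1/2)\times(\text{stable-type integral constant})$ simplifies to the stated ratio $\sqrt{\Gamma(2H_1+1)\Gamma(2H_2+1)\sin(H_1\pi)\sin(H_2\pi)}/(\Gamma(H_1+H_2+1)\cos((H_1+H_2)\pi/2))$, times $\cos$ or $\sin$ of $(H_2-H_1)\pi/2$. I will do this with the reflection formula $\Gamma(-\alpha)\Gamma(1+\alpha)=-\pi/\sin(\pi\alpha)$ and the double-angle identity $\sin(\pi\alpha)=2\sin(\pi\alpha/2)\cos(\pi\alpha/2)$. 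I will calibrate against the diagonal case $H_1=H_2$, $\rho=1$, which must give $\rho_{jj}=1$. For the well-balanced model the same computation should give $\eta=0$ with the same $\rho_{12}$, since the normalization by $a^*_{H_j}$ exactly absorbs the $\cos^2(\pi\beta/2)$ factor. Finally, the boundary case $\alpha=1$ needs a separate limit. Taking $\alpha\to1$ in the odd part, the $\cos(\pi\alpha/2)$ in the denominator vanishes and produces the $\log|u|$ term. This would be handled by an L'H\^opital-type expansion after subtracting a term that is linear in $u$ and cancels among the three terms.
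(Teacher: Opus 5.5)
The paper gives no proof of this theorem; it is quoted from Balcerek et al. Your route is the standard one: It\^o isometry, Plancherel, the splitting $(e^{it\xi}-1)(e^{-is\xi}-1)=(1-e^{it\xi})+(1-e^{-is\xi})-(1-e^{i(t-s)\xi})$, and a relative phase $\theta=\pm\pi(H_1-H_2)/2$ (causal) or $\theta=0$ (well-balanced). It does give the three-term structure with argument $-s$, $\eta_{jk}=0$ in the well-balanced case, and the logarithm at $H_1+H_2=1$.

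\textbf{The gap: the constants.} The step that fails is the one you call bookkeeping, because the computation does not land on the printed constants. After normalization both models have amplitudes $K_j=\sigma_j\sqrt{\Gamma(2H_j+1)\sin(H_j\pi)}$. This is what $\sigma_j a_{H_j}\Gamma(H_j+1/2)$ and, in your convention, $2\sigma_j a^*_{H_j}\Gamma(H_j+1/2)\cos(\pi(H_j-1/2)/2)$ both reduce to. With $\alpha=H_1+H_2$, the reflection and double-angle formulas give
\[
\frac{2}{\pi}\bigl(-\Gamma(-\alpha)\bigr)\cos\frac{\pi\alpha}{2}=\frac{1}{\Gamma(\alpha+1)\sin(\pi\alpha/2)},\qquad \frac{2}{\pi}\bigl(-\Gamma(-\alpha)\bigr)\sin\frac{\pi\alpha}{2}=\frac{1}{\Gamma(\alpha+1)\cos(\pi\alpha/2)} .
\]
The first identity governs the even part, so what you actually obtain is
\[
\rho_{12}=\rho\,\frac{\sqrt{\Gamma(2H_1+1)\Gamma(2H_2+1)\sin(H_1\pi)\sin(H_2\pi)}}{\Gamma(H_1+H_2+1)\sin((H_1+H_2)\pi/2)}\,\cos\theta .
\]
Only $\eta_{12}$ (with $\sin\theta$) carries $\cos((H_1+H_2)\pi/2)$ in the denominator.

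\begin{itemize}
\item The printed $\rho_{12}$ cannot be reached, and the calibration you propose exposes this. For $H_1=H_2=H$ and $\rho=1$ one has $Z_2=(\sigma_2/\sigma_1)Z_1$, hence $\rho_{12}=1$. The printed formula gives $\tan(H\pi)$, which is even undefined at $H=1/2$.
\item Your claim that the well-balanced model has "the same $\rho_{12}$" contradicts your own formula. The constant $a^*_{H_j}$ absorbs the amplitude factor $\cos(\pi(H_j-1/2)/2)$ but not the phase. With $\theta=0$ the factor $\cos((H_2-H_1)\pi/2)$ disappears, so $\rho_{12}^{wb}=\rho_{12}^{c}/\cos((H_2-H_1)\pi/2)$, which differs from $\rho_{12}^{c}$ whenever $H_1\neq H_2$. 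The same mechanism appears in Theorem~\ref{th3}: the phase $e^{-i\pi(H_j-H_k)/2}$ is present in $c^c_{jk}$ and absent from $c^{wb}_{jk}$.
\end{itemize}
Carried out correctly, your plan proves a corrected statement. You should state that statement rather than assert the printed constants.

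\textbf{Technical points.}
\begin{itemize}
\item In the causal case with $H_1+H_2>1$, each single integral $\int(1-e^{iu\xi})|\xi|^{-1-\alpha}e^{i\theta\,\mathrm{sign}\,\xi}\,d\xi$ diverges: after symmetrizing in $\xi$, the integrand behaves like $2u\sin\theta\,|\xi|^{-\alpha}$ near $\xi=0$. So subtracting the linear term (which cancels since $t+(-s)-(t-s)=0$) is needed on the whole range $H_1+H_2\ge 1$, not only at the boundary. After the subtraction, $\int_0^\infty(\sin y-y)y^{-1-\alpha}\,dy=-\Gamma(-\alpha)\sin(\pi\alpha/2)$ for $1<\alpha<2$ gives the same closed form.
\item At $H_1+H_2=1$ the printed $\eta_{12}$ is infinite. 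Your limit yields a separate finite coefficient of $t\log|t|-s\log|s|-(t-s)\log|t-s|$, namely $\frac{2}{\pi}\rho\sqrt{\Gamma(2H_1+1)\Gamma(2H_2+1)\sin(H_1\pi)\sin(H_2\pi)}\,\sin((H_2-H_1)\pi/2)$ up to the sign convention in $w_{jk}$. That coefficient must replace $\eta_{12}$ there.
\item The real even multiplier $2\Gamma(\beta+1)\cos(\pi\beta/2)|\xi|^{-\beta}$ (relative to $(e^{it\xi}-1)/(i\xi)$) belongs to $f_+-f_-$. Reading $(x)_-=\max\{-x,0\}$ literally, $f_++f_-=|t-x|^{\beta}-|x|^{\beta}$ has multiplier $-2i\,\mathrm{sign}(\xi)\Gamma(\beta+1)\sin(\pi\beta/2)|\xi|^{-\beta}$. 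The common factor $-i\,\mathrm{sign}\,\xi$ cancels in $\hat h_1\overline{\hat h_2}$, so $\theta=0$ survives. However, only the $\cos$ version is normalized by the $\cos^2$ in $a^*_{H_j}$, so fix the convention before claiming $\mathbb{E}Z_j(1)^2=\sigma_j^2$.
\end{itemize}
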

We define the increment process for time-lag $\tau>0$ by
\begin{equation}
    b_H^{\tau}(t) = Z(t+\tau)-Z(t).
\end{equation}
The increment process of 2d fBm is called 2d fractional Gaussian noise (2d fGn). Similarly to one-dimensional fGn,  2d fGn is stationary process, see \cite{Balcerek2026} for more detiled discussion. When we consider increments for convenience, we will use discrete-time; thus, we consider discrete-time process $\{b_H^{1}(n)\}_{n\in\mathbb{N}_0}$. For the remainder parts of this article, we confine our attention to the discrete-time processes. Now, let us recall the covariance structure of the causal and well-balanced 2d fGn. 
\begin{theorem}\cite{Balcerek2026}
    Let $H_1,\, H_2\in(0,1)$ and $|\rho|\leq1$. The covariance structure of the causal or well-balanced 2d fGn  $\{b_H^{1}(n)\}_{n\in\mathbb{N}_0}$is given by
    \begin{equation}\label{gammaeq}
        \gamma_{jk}^{1}(h) = Cov(b_H^{1}(n),b_H^{1}(n+h)) = \frac{\sigma_j\sigma_k\rho_{jk}}{2}[w_{jk}(h+1)(h+1)^{H_j+H_k} + w_{jk}(h-1)|h-1|^{H_j+H_k} - 2w_{jk}(h)h^{H_j+H_k}],
    \end{equation}
    where the function $w_{jk}$ is given in Eq. \eqref{wfunction} and depends on whether we consider causal or
well-balanced 2d fBm. The coefficients $\rho_{jk}$ and $\eta_{jk}$ are defined in Theorem \ref{th1}.
\end{theorem}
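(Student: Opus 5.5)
The plan is to derive the increment covariance directly from the 2d fBm covariance in Theorem \ref{th1}, using bilinearity of covariance. Write $b_{H,j}^1(n)=Z_j(n+1)-Z_j(n)$ for the $j$-th coordinate. Then
\begin{equation*}
\begin{split}
Cov\bigl(b_{H,j}^1(n),b_{H,k}^1(n+h)\bigr) ={}& \gamma_{jk}(n+1,n+h+1)-\gamma_{jk}(n,n+h+1)\\
&-\gamma_{jk}(n+1,n+h)+\gamma_{jk}(n,n+h),
\end{split}
\end{equation*}
where each term is $Cov(Z_j(\cdot),Z_k(\cdot))$ evaluated at the appropriate pair of times.

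The first step is to substitute the expression of Theorem \ref{th1}. Each $\gamma_{jk}$ has the form $\frac{\sigma_j\sigma_k}{2}\bigl(A(t)+A'(s)-D(t-s)\bigr)$. Here $A$ and $A'$ are the one-point terms $w_{jk}(\cdot)|\cdot|^{H_j+H_k}$ depending only on the individual times, and $D(u)=w_{jk}(u)|u|^{H_j+H_k}$ is the lag term.

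In the alternating four-term sum, every one-point term appears once with $+$ and once with $-$. For instance, the term depending only on time $n+h+1$ occurs in the first and second summands with opposite signs. Hence all one-point terms cancel, and only the lag terms survive. The time differences in the four summands are, up to the sign convention, $h$, $h+1$, $h-1$ and $h$. With the signs $-,+,+,-$ from the minus in front of $D$, this gives
\begin{equation*}
\frac{\sigma_j\sigma_k}{2}\bigl[D(h+1)+D(h-1)-2D(h)\bigr].
\end{equation*}
Since the expression depends only on $h$ and not on $n$, this also gives stationarity of 2d fGn.

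The main obstacle is bookkeeping of orientation and signs. The function $w_{jk}$ is not even when $\eta_{jk}\neq0$ (causal case), so one must check that the lag argument appears as $+h$ rather than $-h$ under the convention $\gamma_{jk}(s,t)=Cov(Z_j(t),Z_k(s))$. One must also check that $D(h-1)$ correctly carries $w_{jk}(h-1)$ when $h=0$, where $\text{sign}(-1)=-1$; the case $H_j+H_k=1$ with the $\log$ term needs the same check.

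I would verify the sign convention by matching the diagonal case $j=k$, where $\eta=0$ and the classical fGn covariance $\frac{\sigma^2}{2}(|h+1|^{2H}+|h-1|^{2H}-2|h|^{2H})$ must be recovered. Finally, I note that the prefactor in \eqref{gammaeq} should be read as $\sigma_j\sigma_k/2$ with $\rho_{jk}$ absorbed into $w_{jk}$. Otherwise $\rho_{jk}$ would appear twice, and the diagonal check would still hold only because $\rho_{jj}=1$. I would state the result in that form.
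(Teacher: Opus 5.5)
The paper gives no proof of this statement; it only quotes it from Balcerek et al. Your bilinear expansion, with cancellation of the one-point terms, is the natural derivation from Theorem \ref{th1}. The cancellation and the surviving combination $D(h+1)+D(h-1)-2D(h)$ are correct. The argument never uses the particular form of $w_{jk}$, so the logarithmic case and $h=0$ need no extra care: at $h=0$ you simply get $D(-1)$, and $\gamma^1_{jk}(0)=\sigma_j\sigma_k\rho_{jk}$. Your reading of the prefactor as $\sigma_j\sigma_k/2$ is also right. The extra $\rho_{jk}$ in \eqref{gammaeq} is a misprint, since $w_{jk}$ already contains $\rho_{jk}$.

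The one real gap is the orientation question you single out yourself and then leave open. The check you propose cannot close it: for $j=k$ one has $\eta_{jj}=0$, so $w_{jj}\equiv 1$ is even and the diagonal case cannot tell a lag $h$ from $-h$. Read the orientation off Theorem \ref{th1} directly. With $\gamma_{jk}(s,t)=Cov(Z_j(t),Z_k(s))$, the lag argument $t-s$ is (time of $Z_j$) minus (time of $Z_k$). Your four-term sum $\gamma_{jk}(n+1,n+h+1)-\gamma_{jk}(n,n+h+1)-\gamma_{jk}(n+1,n+h)+\gamma_{jk}(n,n+h)$ therefore equals $Cov(b^1_{j,H}(n+h),b^1_{k,H}(n))$, not $Cov(b^1_{j,H}(n),b^1_{k,H}(n+h))$ as you labelled it. For that quantity your lags $h,h+1,h-1,h$ are right, and \eqref{gammaeq} follows.

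For the opposite ordering every lag changes sign, which replaces $w_{jk}$ by $w_{jk}(-\,\cdot)=w_{kj}(\cdot)$, because $\eta_{kj}=-\eta_{jk}$. For instance, for $h\ge 2$ and $H_j+H_k\neq 1$ the coefficient becomes $\rho_{jk}+\eta_{jk}$ instead of $\rho_{jk}-\eta_{jk}$. This is a genuine difference in the causal model whenever $\rho\neq 0$ and $H_1\neq H_2$.

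So you should state the result as $\gamma^1_{jk}(h)=Cov(b^1_{j,H}(n+h),b^1_{k,H}(n))$. This is the Brockwell--Davis convention, and it is the one the paper actually uses in the proof of Proposition \ref{acf}, where $\mathbb{E}[b^1_{1,H}(n+h)\,b^1_{2,H}(n)]$ is written as $\gamma_{12}(h)$.
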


Beyond characterizing the time-domain properties of 2d fBm, it is essential to establish its behavior within the frequency domain. To this end, we first review the spectral density of 2d fGn.

\begin{definition}\cite{brockwell1991time}
    Let us consider a 2d discrete-time stationary stochastic process with zero-mean and covariance matrix $\gamma^1(h)=[\gamma_{ij}^1(h)]_{i,j=1}^{2}$, $h\in \mathbb{Z}$. If there exists corresponding spectral density matrix, then it  is given by
\begin{equation}
    f^1(\lambda) = [f_{ij}^1(h)]_{i,j=1}^{2} = \sum_{h=-\infty}^{\infty}e^{ih\lambda}\gamma^{1}(h) = \left[\sum_{h=-\infty}^{\infty}e^{ih\lambda}\gamma_{ij}^1(h)\right]_{i,j=1}^{2}, \quad \lambda\in\mathbb{R}.
\end{equation}
Now, we recall the form of the spectral density matrix of 2d fGn and its asymptotic properties.
\end{definition}
\begin{theorem}\cite{Balcerek2026}\label{th3}
Let $H_1, H_2 \in (0, 1)$, $H = \text{diag}(H_1, H_2)$ be a diagonal matrix with elements $H_1, H_2$, and $|\rho| \le 1$. Additionally, let matrices $C_c = [c^c_{jk}]_{j,k=1,2}$ and $C_{wb} = [c^{wb}_{jk}]_{j,k=1,2}$ that correspond to causal and well-balanced 2d fBm, respectively, have the following elements
\begin{align}
    c^c_{jj} &= \frac{1}{2\pi} \sigma_j^2 \Gamma^2(H_j + 0.5) a_{H_j}^2, \\
    c^c_{jk} &= \frac{1}{2\pi} \rho \sigma_j \sigma_k \Gamma(H_j + 0.5) \Gamma(H_k + 0.5) a_{H_j} a_{H_k} e^{-i \frac{\pi}{2}(H_j - H_k)}, \quad j \neq k, \\
    c^{wb}_{jj} &= \frac{2}{\pi} \sigma_j^2 \cos^2\left(\frac{(H_j - 0.5)\pi}{2}\right) \Gamma^2(H_j + 0.5) a_{H_j}^{*2}, \\
    c^{wb}_{jk} &= \frac{1}{2\pi} \rho \sigma_j \sigma_k \cos\left(\frac{(H_j - 0.5)\pi}{2}\right) \cos\left(\frac{(H_k - 0.5)\pi}{2}\right) \Gamma(H_j + 0.5) \Gamma(H_k + 0.5) a_{H_j}^* a_{H_k}^*, \quad j \neq k,
\end{align}
for $j, k = 1, 2$. The constants $a_{H_j}$ and $a_{H_j}^*$ are given in Eqs. \eqref{aconstant} and \eqref{astarconstant}, respectively. Then, for 2d fGn, the spectral density matrix $f^1(\lambda)$ is given by
\begin{equation}
    f^1(\lambda) = |1 - e^{-i\lambda}|^2 \sum_{n=-\infty}^{\infty} \frac{(\lambda + 2\pi n)^{-D} \tilde{C} (\lambda + 2\pi n)^{-D}_+ + (\lambda + 2\pi n)^{-D} \bar{\tilde{C}} (\lambda + 2\pi n)^{-D}_-}{(\lambda + 2\pi n)^2},
\end{equation}
where $(x)_+ \equiv \max\{x, 0\}$, $(x)_- \equiv \max\{-x, 0\}$,
\begin{equation}\label{matd}
D = [d_{ij}]_{i,j=1}^2 = \text{diag}(H_1, H_2) - 0.5I_2,
\end{equation}
$I_2$ is a $2 \times 2$ an identity matrix, and $\bar{C}$ denotes the element-wise complex conjugate of matrix $C$. The matrix 
\begin{equation}\label{matc}
    \tilde{C} = [\tilde{c}_{ij}]_{i,j=1}^2
\end{equation}
is equal to $C_c$ or $C_{wb}$ depending on whether we consider a causal or well-balanced 2d fBm. The components $ f^1_{jk}(\lambda)$ can thus be expressed as
\begin{equation}\label{fij1}
    f^1_{jk}(\lambda) = |1 - e^{-i\lambda}|^2 \sum_{n=-\infty}^{\infty} \frac{(\lambda + 2\pi n)_+^{1-H_j-H_k} \tilde{c}_{jk} + (\lambda + 2\pi n)_-^{1-H_j-H_k} \bar{\tilde{c}}_{jk}}{(\lambda + 2\pi n)^2},
\end{equation}
where $\tilde{c}_{jk}$ are elements of the appropriate matrix. Moreover, for $\lambda \to 0$ we have
\begin{equation}
    f^1(\lambda) \sim \lambda^{-D} \tilde{C} \lambda^{-D}.
\end{equation}
The element-wise asymptotic is
\begin{equation}
    f^1_{jk}(\lambda) \sim \tilde{c}_{jk} \lambda^{-(d_j + d_k)} = \tilde{c}_{jk} \lambda^{-(H_j + H_k - 1)},
\end{equation}
where $d_j$, $d_k$ are diagonal elements of matrix $D$.
\end{theorem}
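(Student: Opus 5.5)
The plan is to derive the spectral density of the 2d fGn from the spectral representation of the underlying 2d fBm, and then pass to the discrete-time increments through aliasing. \textbf{Step 1.} Starting from the time-domain representation of $X(t)$ (respectively $X^*(t)$), I will write the kernels in the frequency domain using the Fourier transforms of $f_{\pm}(\cdot;t,H_j-1/2)$. The classical identity
\[
\int_{\mathbb{R}}e^{i x\lambda}f_{+}(x;t,H-\tfrac12)\,dx=\Gamma(H+\tfrac12)\,\frac{e^{it\lambda}-1}{i\lambda}\,(i\lambda)^{-(H-1/2)}
\]
and its analogue for $f_-$ give, by Plancherel and the Gaussian isometry,
\[
\mathrm{Cov}(Z_j(t),Z_k(s))=\int_{\mathbb{R}}\frac{(e^{it\lambda}-1)(e^{-is\lambda}-1)}{\lambda^2}\Big(\lambda_+^{1-H_j-H_k}\tilde c_{jk}+\lambda_-^{1-H_j-H_k}\bar{\tilde c}_{jk}\Big)d\lambda .
\]
The cross-coupling matrix $\begin{bmatrix}1&0\\\rho&\sqrt{1-\rho^2}\end{bmatrix}$ produces the factor $\rho$ off the diagonal and $1$ on it, since the $B_2$ contributions to the $(1,2)$ entry vanish. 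The phases $(i\lambda)^{-d_j}\overline{(i\lambda)^{-d_k}}=|\lambda|^{-d_j-d_k}e^{\mp i\pi(d_j-d_k)/2}$ yield $e^{-i\pi(H_j-H_k)/2}$ in the causal case. In the well-balanced case, adding the $f_+$ and $f_-$ transforms gives a real factor $2\cos((H-1/2)\pi/2)|\lambda|^{-d}$, which explains $c^{wb}_{jk}$. Collecting the constants produces $C_c$ and $C_{wb}$.

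\textbf{Step 2.} For the increments with unit lag, I will substitute into $\gamma^1_{jk}(h)$, which turns the kernel into $|e^{i\lambda}-1|^2e^{ih\lambda}/\lambda^2$ times the same density. This writes $\gamma^1_{jk}(h)=\int_{\mathbb{R}}e^{ih\lambda}g_{jk}(\lambda)\,d\lambda$, with sign conventions matched to the paper's definition of $f^1$. \textbf{Step 3.} I will fold $\mathbb{R}$ onto $[-\pi,\pi)$ by writing $\lambda\mapsto\lambda+2\pi n$, using the $2\pi$-periodicity of $e^{ih\lambda}$ and $|1-e^{-i\lambda}|^2$. This gives $\gamma^1_{jk}(h)=\int_{-\pi}^{\pi}e^{ih\lambda}f^1_{jk}(\lambda)\,d\lambda$ with $f^1_{jk}$ exactly as in \eqref{fij1}, up to the $2\pi$ normalization absorbed into $\tilde c$. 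Uniqueness of Fourier coefficients then identifies $f^1$ as the spectral density matrix. To interchange the sum and the integral I will use Tonelli: the $n$-th term is $O(|n|^{-1-H_j-H_k})$, so the series converges.

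\textbf{Step 4.} For the asymptotics as $\lambda\to0$, I will isolate the $n=0$ term, which equals $|1-e^{-i\lambda}|^2\lambda^{-2}\lambda^{1-H_j-H_k}\tilde c_{jk}$. Since $|1-e^{-i\lambda}|^2/\lambda^2\to1$, this term is asymptotic to $\tilde c_{jk}\lambda^{-(H_j+H_k-1)}$. The remaining sum is bounded near $0$ and is multiplied by $|1-e^{-i\lambda}|^2=O(\lambda^2)$, so it is $O(\lambda^2)$ and hence negligible, because $H_j+H_k-1>-1$. Written in matrix form this is $\lambda^{-D}\tilde C\lambda^{-D}$.

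The main obstacle is Step 1. The Fourier transform identity for $f_\pm$ holds only in the $L^2$ sense and needs separate handling for $H<1/2$ and $H>1/2$. The constants and phases must also be tracked carefully so that they match $a_{H_j}$, $a^*_{H_j}$ and the stated $C$ matrices. I would first verify the diagonal case $j=k$ against the known fGn spectral density, and then extend to the cross terms.
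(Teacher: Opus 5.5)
The paper gives no proof of this theorem; it is quoted from Balcerek et al., so there is no argument of the authors to compare with. Your route is the standard one: spectral form of the moving-average kernels, Plancherel, folding onto $[-\pi,\pi)$, isolating the $n=0$ term. Steps 3--4 are sound, and the causal constants $C_c$ come out as you say, modulo the conventions discussed at the end.

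\textbf{The gap: the well-balanced case of Step 1.} This is exactly where $C_{wb}$ is supposed to come from, and the transform of $f_-$ is not the same-sign ``analogue'' of that of $f_+$. Since $f_-(x;t,\beta)=(x-t)_+^{\beta}-x_+^{\beta}$, its transform is $(e^{it\lambda}-1)\Gamma(\beta+1)(-i\lambda)^{-\beta-1}$, and $(-i\lambda)^{-1}=-(i\lambda)^{-1}$. Hence
\[
\int_{\mathbb{R}}e^{ix\lambda}\bigl(f_++f_-\bigr)(x;t,d)\,dx=\Gamma(H+\tfrac12)\,\frac{e^{it\lambda}-1}{i\lambda}\bigl[(i\lambda)^{-d}-(-i\lambda)^{-d}\bigr]=-2i\,\mathrm{sign}(\lambda)\sin\bigl(\tfrac{\pi d}{2}\bigr)\,\Gamma(H+\tfrac12)\,\frac{e^{it\lambda}-1}{i\lambda}\,|\lambda|^{-d},
\]
with $d=H-\tfrac12$. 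There is no factor $2\cos(\pi d/2)$. Two sanity checks confirm this:
\begin{itemize}
\item $|t-x|^{d}-|x|^{d}$ is antisymmetric about $x=t/2$, so its transform is $e^{it\lambda/2}$ times an odd function.
\item It vanishes identically at $H=\tfrac12$.
\end{itemize}
The common phase $-i\,\mathrm{sign}(\lambda)$ cancels in $\hat g_j\overline{\hat g_k}$, so the cross-spectrum is indeed real. But it carries $\sin$ where the statement has $\cos$. With the printed $a^*_H$, whose square tends to $\tfrac14$ as $H\to\tfrac12$, one even gets $\mathrm{Var}\,X^*_j(1)=\sigma_j^2\tan^2(\pi d_j/2)\neq\sigma_j^2$.

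\textbf{Consequence: ``collecting the constants produces $C_{wb}$'' fails.} Whatever the trigonometric factor, the same computation gives the prefactor $\frac{1}{2\pi}\cdot 4=\frac{2}{\pi}$ both on and off the diagonal. The statement instead puts $\frac{1}{2\pi}$ in $c^{wb}_{jk}$. Take $\rho=1$, $H_1=H_2$, $\sigma_1=\sigma_2$: then $X^*_1=X^*_2$, which forces $c^{wb}_{12}=c^{wb}_{11}$, while the printed formulas give $c^{wb}_{11}/4$. So the well-balanced part cannot be proved as printed. What your method actually yields, after renormalising $a^*_H$ with $\sin^2$ (for $H_j\neq\tfrac12$), is
\[
c^{wb}_{jj}=c^c_{jj},\qquad c^{wb}_{jk}=\frac{\rho\sigma_j\sigma_k}{2\pi}\,\mathrm{sign}(d_jd_k)\sqrt{\Gamma(2H_j+1)\sin(\pi H_j)\,\Gamma(2H_k+1)\sin(\pi H_k)} .
\]
A proof should derive and state this rather than assert agreement with the statement.

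\textbf{The conventions you defer are also substantive.}
\begin{itemize}
\item With the paper's definition $f^1(\lambda)=\sum_h e^{ih\lambda}\gamma^1(h)$, Steps 2--3 give $2\pi$ times the displayed formula. This cannot be ``absorbed into $\tilde c$'', because $\tilde c$ is prescribed.
\item For $\gamma^1_{jk}(h)=\mathrm{Cov}(b_j(n),b_k(n+h))$, with your transform convention the increment kernel is $e^{-ih\lambda}|e^{i\lambda}-1|^2/\lambda^2$, not $e^{ih\lambda}$. The causal cross-spectrum is not even in $\lambda$, so this sign decides whether $\tilde c_{jk}$ or $\bar{\tilde c}_{jk}$ multiplies $\lambda_+$. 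It must be fixed at the outset, not matched afterwards.
\end{itemize}
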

More details on 2d fBm and 2d fGn can be found in \cite{Balcerek2026}.
\section{Cyclic fractional Gaussian noise and its properties}\label{sec3}
In this section, we introduce the cyclic fractional Gaussian noise (cfGn), which is an extension of the classical cyclostationary process defined in Eq. (\ref{eq1}). 
\begin{definition}{(cyclic fractional Gaussian noise)} The cfGn is defined as follows
\begin{eqnarray}\label{eq2}
    Y(n) \eqdef a_1\cos(\lambda_0n)b_{1,H}^{1}(n)+a_2\sin(\lambda_0n)b_{2,H}^{1}(n),~~n\in \mathbb{N}_0,
\end{eqnarray}
where $\{b_{1,H}^{1}(n)\}_{n\in\mathbb{N}_0}$, $\{b_{2,H}^{1}(n)\}_{n\in\mathbb{N}_0}$ are components of two-dimensional fGn defined in Section \ref{2dfBm} and $\lambda_0,a_1,a_1\in \mathbb{R}$.
\end{definition}
Note that cfGn is a discrete-time Gaussian process with zero mean. In addition, when $a_1$ or $a_2$ are equal to zero, this process satisfies Eq. (\ref{eq1}). Thus, it can be considered as an extension of the classical cyclostationary model. On the other hand, it can also be considered as the extension of the process defined in \cite{kechagias2024cyclical}, where the authors discussed the model defined as in (\ref{eq2}) with the components $\{b_{1,H}(n)\}_{n\in\mathbb{N}_0}$, $\{b_{2,H}(n)\}_{n\in\mathbb{N}_0}$ related by the specific covariance matrix. In our analysis, for simplicity, we assume $a_1=a_2=1$; however, the presented methodologies are also valuable in the general case. In the next part, we discuss the main properties of the model (\ref{eq2}) expressed in terms of its ACVF in the time and frequency domains. First, we calculate the ACVF of $\{Y(n)\}_{n\in\mathbb{N}_0}$.
\begin{proposition}\label{acf}
    The ACVF of cfGn $\{Y(n)\}_{n\in\mathbb{N}_0}$ is given by
    \begin{equation}\label{yautocov}
        \gamma_Y(n, h) = \dot{\gamma}(h) + \tilde{\gamma}(n, h),
    \end{equation}
    where \begin{align}
    \dot{\gamma}(h) &= \frac{1}{2} \left[ (\gamma_{11}^1(h) + \gamma_{22}^{1}(h))\cos(\lambda_0 h) + (\gamma_{21}^{1}(h) - \gamma_{12}^{1}(h))\sin(\lambda_0 h) \right], \\
    \tilde{\gamma}(n, h) &= \frac{1}{2} \left[ (\gamma_{11}^{1}(h) - \gamma_{22}^{1}(h))\cos(\lambda_0(2n+h)) + (\gamma_{21}^{1}(h) + \gamma_{12}^{1}(h))\sin(\lambda_0(2n+h)) \right],
\end{align}
and $\gamma_{ij}^1$ are defined in Eq. \eqref{gammaeq}.
\begin{proof}
 Since $\{Y(n)\}_{n\in\mathbb{N}_0}$ is a zero-mean process, we have
    \begin{equation}\label{varproof1}
    \begin{aligned}
    \gamma_Y(n, h) = \mathbb{E}[Y(n+h)Y(n)] &= \mathbb{E}\left[ (\cos(\lambda_0(n+h))b_{1,n+h}^1 + \sin(\lambda_0(n+h))b_{2,n+h}^1)(\cos(\lambda_0 n)b_{1,n}^1 + \sin(\lambda_0 n)b_{2,n}^1) \right] \\
    &= \cos(\lambda_0(n+h))\cos(\lambda_0 n)\gamma_{11}(h) + \sin(\lambda_0(n+h))\sin(\lambda_0 n)\gamma_{22}(h) \\
    &+ \cos(\lambda_0(n+h))\sin(\lambda_0 n)\gamma_{12}(h) + \sin(\lambda_0(n+h))\cos(\lambda_0 n)\gamma_{21}(h).
\end{aligned}
\end{equation}
Using trigonometric identities
$\cos(\alpha) \cos(\beta) = (\cos(\alpha-\beta) + \cos(\alpha+\beta))/2$ and $\sin(\alpha) \sin (\beta) = (\cos(\alpha-\beta) - \cos(\alpha+\beta))/2$,
$\sin(\alpha) \cos(\beta) = (\sin(\alpha+\beta) + \sin(\alpha-\beta))/2$ and $\cos(\alpha) \sin(\beta) = (\sin(\alpha+\beta) - \sin(\alpha-\beta))/2$, plugging them into Eq. \eqref{varproof1} and grouping of terms with $n$ and without $n$ yields the thesis.
\end{proof}
\end{proposition}

We now proceed to characterize the  properties of the process $\{Y(n)\}_{n\in\mathbb{N}_0}$ in frequency domain. We begin by revisiting the fundamental definitions of cyclic frequencies and corresponding cyclic autocorrelation functions (CAFs).
\begin{definition}
    Assume that the ACVF of a second-order stationary stochastic process has a convergent Fourier series expansion \cite{napolitano1}
    \begin{equation} 
        \gamma_Y(n, h) = \sum_{\alpha \in \mathcal{A}} R_Y^\alpha(h) e^{i\alpha n},
    \end{equation}
    then, $R_Y^\alpha(h)$ are called CAFs and $\mathcal{A}$ is called a set of non-zero cyclic frequencies.  
\end{definition}  
First, we identify the set of non-zero cyclic frequencies $\mathcal{A}$ and calculate CAFs of cfGn.
\begin{proposition}\label{cafprop}
    The set of non-zero cyclic frequencies of the cfGn $\{Y(n)\}_{n\in\mathbb{N}_0}$ is $\mathcal{A} = \{0, 2\lambda_0, -2\lambda_0\}$. The corresponding CAFs (i.e. $R_Y^\alpha(h)$), are given by
    \begin{align}
        R_Y^0(h) &= \dot{\gamma}(h), \\
        R_Y^{2\lambda_0}(h) &= \frac{e^{i\lambda_0 h}}{4} \left[ (\gamma_{11}^{1}(h) - \gamma_{22}^{1}(h)) - i(\gamma_{21}^{1}(h) + \gamma_{12}^{1}(h)) \right], \\
        R_Y^{-2\lambda_0}(h) &= \overline{R_Y^{2\lambda_0}(h)},
    \end{align}
    where functions $\gamma^1_{ij}(h)$ are defined in Eq. \eqref{gammaeq}.
\end{proposition}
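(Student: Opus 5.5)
The plan is to start from Proposition \ref{acf}, which already splits $\gamma_Y(n,h)$ into $\dot{\gamma}(h)$, which does not depend on $n$, and $\tilde{\gamma}(n,h)$, which depends on $n$ only through $\cos(\lambda_0(2n+h))$ and $\sin(\lambda_0(2n+h))$. Using the Euler formulas
\[
\cos x=\tfrac{1}{2}(e^{ix}+e^{-ix}), \qquad \sin x=\tfrac{1}{2i}(e^{ix}-e^{-ix}),
\]
with $x=\lambda_0(2n+h)$, I will rewrite $\tilde{\gamma}(n,h)$ as a combination of $e^{i2\lambda_0 n}$ and $e^{-i2\lambda_0 n}$ with $h$-dependent coefficients. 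Together with $\dot{\gamma}(h)=\dot{\gamma}(h)e^{i0\cdot n}$, this gives a finite (trigonometric polynomial) Fourier expansion of $\gamma_Y(n,h)$ in $n$. Convergence is therefore trivial, and the frequencies present are exactly $0,\pm 2\lambda_0$.

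Collecting the coefficient of $e^{i2\lambda_0 n}$ gives
\[
\frac{e^{i\lambda_0 h}}{4}\Big[(\gamma^1_{11}(h)-\gamma^1_{22}(h))+\tfrac{1}{i}(\gamma^1_{21}(h)+\gamma^1_{12}(h))\Big],
\]
and $1/i=-i$ yields the stated $R_Y^{2\lambda_0}(h)$. The coefficient of $e^{-i2\lambda_0 n}$ is the same expression with $i$ replaced by $-i$. Since all $\gamma^1_{jk}(h)$ are real, it equals $\overline{R_Y^{2\lambda_0}(h)}$. The coefficient of $e^{i0n}$ is $\dot{\gamma}(h)$, which gives $R_Y^0$.

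To identify these coefficients with the CAFs, I will note uniqueness of the coefficients. For $\alpha\neq\beta$ mod $2\pi$, the functions $n\mapsto e^{i\alpha n}$ and $n\mapsto e^{i\beta n}$ are linearly independent. Equivalently, $R_Y^\alpha(h)=\lim_{N\to\infty}\frac1N\sum_{n=0}^{N-1}\gamma_Y(n,h)e^{-i\alpha n}$, and the averages of $e^{i(\beta-\alpha)n}$ vanish for $\beta\neq\alpha$ mod $2\pi$. For any other $\alpha$, all three terms average to zero, so $R_Y^\alpha\equiv 0$, and the set of cyclic frequencies is exactly as stated.

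The main obstacle is the degenerate cases. If $2\lambda_0\equiv 0$ or $4\lambda_0\equiv 0 \pmod{2\pi}$, the frequencies $0$ and $\pm2\lambda_0$ coincide modulo $2\pi$, and the coefficients merge. Also, if $\gamma^1_{11}=\gamma^1_{22}$ and $\gamma^1_{12}+\gamma^1_{21}=0$ for all $h$, then $R_Y^{\pm2\lambda_0}\equiv 0$. I would handle this by stating the result under the implicit assumption $\lambda_0\notin\frac{\pi}{2}\mathbb{Z}$ and interpreting $\mathcal{A}$ as the support of the expansion as written. The case $\lambda_0\in\frac{\pi}{2}\mathbb{Z}$ would be mentioned as a remark where the coefficients add together.
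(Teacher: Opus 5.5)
Your proposal is correct and follows the paper's proof essentially step by step. Both apply Euler's formulas to $\tilde{\gamma}(n,h)$, collect the coefficients of $e^{\pm i2\lambda_0 n}$, use $1/i=-i$, and read off conjugacy from the realness of the $\gamma^1_{jk}(h)$. Beyond this, your time-average uniqueness argument and your caveats make explicit what the paper's closing remark ("all other coefficients vanish") leaves implicit. The first caveat is $\lambda_0\notin\frac{\pi}{2}\mathbb{Z}$, where the frequencies merge modulo $2\pi$. The second is that $R_Y^{\pm2\lambda_0}$ can vanish identically, e.g.\ for $H_1=H_2$, $\sigma_1=\sigma_2$, $\rho=0$.
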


\begin{proof}
From Proposition \ref{acf} we have that the ACVF of the process $\{Y(n)\}_{n\in\mathbb{N}_0}$ is decomposed as $\gamma_Y(n, h) = \dot{\gamma}(h) + \tilde{\gamma}(n, h)$.
    The term $\dot{\gamma}(h)$ is independent of $n$, i.e., $R_Y^0(h) = \dot{\gamma}(h)$. Next, we analyze the n-dependent component $\tilde{\gamma}(n, h)$. We have the following
    \begin{equation}\label{autocovndepaux}
        \tilde{\gamma}(n, h) = \frac{\gamma_{11}^{1}(h) - \gamma_{22}^{1}(h)}{2}\cos(\lambda_0(2n+h)) + \frac{\gamma_{21}^{1}(h) + \gamma_{12}^{1}(h)}{2}\sin(\lambda_0(2n+h)).
    \end{equation}
    Using Euler's formulas $\cos(\theta) = \frac{e^{i\theta} + e^{-i\theta}}{2}$ and $\sin(\theta) = \frac{e^{i\theta} - e^{-i\theta}}{2i}$, we expand the trigonometric terms with $\theta = 2\lambda_0 n + \lambda_0 h$
    \begin{align}
        \cos(2\lambda_0 n + \lambda_0 h) &= \frac{1}{2}e^{i\lambda_0 h}e^{i2\lambda_0 n} + \frac{1}{2}e^{-i\lambda_0 h}e^{-i2\lambda_0 n}, \\
        \sin(2\lambda_0 n + \lambda_0 h) &= \frac{1}{2i}e^{i\lambda_0 h}e^{i2\lambda_0 n} - \frac{1}{2i}e^{-i\lambda_0 h}e^{-i2\lambda_0 n}.
    \end{align}
    Substituting these into Eq. \eqref{autocovndepaux} and grouping with respect to the exponential functions $e^{i2\lambda_0 n}$ and $e^{-i2\lambda_0 n}$, we obtain the following
    \begin{align}
        \tilde{\gamma}(n, h) &= e^{i2\lambda_0 n} \left[ \frac{\gamma_{11}^{1}(h) - \gamma_{22}^{1}(h)}{4}e^{i\lambda_0 h} + \frac{\gamma_{21}^{1}(h) + \gamma_{12}^{1}(h)}{4i}e^{i\lambda_0 h} \right] \nonumber \\
        &+ e^{-i2\lambda_0 n} \left[ \frac{\gamma_{11}^{1}(h) - \gamma_{22}^{1}(h)}{4}e^{-i\lambda_0 h} - \frac{\gamma_{21}^{1}(h) + \gamma_{12}^{1}(h)}{4i}e^{-i\lambda_0 h} \right].
    \end{align}
    Using the identity $1/i = -i$, the expression multiplying $e^{i2\lambda_0 n}$ becomes
    \begin{equation}
        R_Y^{2\lambda_0}(h) = \frac{e^{i\lambda_0 h}}{4} \left[ (\gamma_{11}^{1}(h) - \gamma_{22}^{1}(h)) - i(\gamma_{21}^{1}(h) + \gamma_{12}^{1}(h)) \right].
    \end{equation}
    Similarly, the expression multiplying $e^{-i2\lambda_0 n}$ is
    \begin{equation}
        R_Y^{-2\lambda_0}(h) = \frac{e^{-i\lambda_0 h}}{4} \left[ (\gamma_{11}^{1}(h) - \gamma_{22}^{1}(h)) + i(\gamma_{21}^{1}(h) + \gamma_{12}^{1}(h)) \right].
    \end{equation}
    We observe that $R_Y^{-2\lambda_0}(h)$ is indeed the complex conjugate of $R_Y^{2\lambda_0}(h)$. Since the expansion fully accounts for all terms in $\gamma_Y(n, h)$, the coefficients for all other $\alpha$ values are zero.
\end{proof}
Now, let us recall the definition of the cyclic spectrum. In the following analysis, by $\mathcal{F}\{f\}(\lambda)$ we denote the Fourier transform of the function $f$.
\begin{definition}
    Let $R_Y^\alpha(h)$ be the CAF corresponding to its non-zero cyclic frequency 
    $\alpha \in \mathcal{A}$ for some discrete-time stationary stochastic process with zero-mean. Then, the cyclic spectrum at frequency $\alpha$ is given by \cite{napolitano1} 
    \begin{equation}
        S_Y^\alpha(\lambda) = \sum_{h=-\infty}^{\infty} R_Y^\alpha(h) e^{-i\lambda h}.
    \end{equation}
\end{definition}Next, we calculate the cyclic spectrum of the cfGn.
\begin{proposition}
    The cyclic spectrum of cfGn $\{Y(n)\}_{n\in\mathbb{N}_0}$ for the cyclic frequency $\alpha = 0$ is given by
    \begin{equation}\label{cs0}
        S_Y^0(\lambda) = \frac{1}{4} \sum_{s \in \{-1, 1\}} \left[ f_{11}^{1}(\lambda - s\lambda_0) + f_{22}^{1}(\lambda - s\lambda_0) - 2s \Im f_{12}^{1}(\lambda - s\lambda_0) \right].
    \end{equation}
    For the cyclic frequency $\alpha = 2\lambda_0$ the corresponding cyclic spectrum is given by
    \begin{equation}\label{cs2l0}
        S_Y^{2\lambda_0}(\lambda) = \frac{1}{4} \left[ f_{11}^{1}(\lambda - \lambda_0) - f_{22}^{1}(\lambda - \lambda_0) - 2i \Re f_{12}^{1}(\lambda - \lambda_0) \right].
    \end{equation}
    For the cyclic frequency $\alpha = -2\lambda_0$ the corresponding cyclic spectrum is given by
    \begin{equation}
        S_Y^{-2\lambda_0}(\lambda) = \overline{S_Y^{2\lambda_0}(-\lambda)} = \frac{1}{4} \left[ f_{11}^{1}(\lambda + \lambda_0) - f_{22}^{1}(\lambda + \lambda_0) + 2i \Re f_{12}^{1}(\lambda + \lambda_0) \right].
    \end{equation}
In the above formulas, $\Re$ and $\Im$ are real and imaginary parts of the complex number, respectively, and the functions $f_{ij}^1(\lambda)$ are given in Eq. \eqref{fij1}.
\end{proposition}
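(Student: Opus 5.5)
The plan is a direct computation: substitute the CAFs from Proposition \ref{cafprop} into the definition of $S_Y^\alpha(\lambda)$, then turn each series into a shifted entry of the spectral density matrix $f^1$ of Theorem \ref{th3}. Two structural facts about 2d fGn do the work.

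\begin{itemize}
\item \emph{Symmetry.} Stationarity of the bivariate fGn gives $\gamma^1_{jk}(-h)=\gamma^1_{kj}(h)$.
\item \emph{Reflection.} Since the paper's spectral density uses the kernel $e^{ih\lambda}$ while the cyclic spectrum uses $e^{-ih\lambda}$, reindexing $h\mapsto -h$ gives
\begin{equation*}
\sum_{h}\gamma^1_{jk}(h)e^{-i\lambda h}=f^1_{kj}(\lambda).
\end{equation*}
In particular $f^1_{jj}$ is real and even, and $f^1_{21}=\overline{f^1_{12}}$. Hence $f^1_{12}+f^1_{21}=2\Re f^1_{12}$ and $f^1_{12}-f^1_{21}=2i\Im f^1_{12}$.
\item \emph{Modulation.} Multiplying a sequence by $e^{is\lambda_0 h}$, $s\in\{-1,1\}$, shifts the argument of its transform:
\begin{equation*}
\sum_h \gamma^1_{jk}(h)e^{is\lambda_0 h}e^{-i\lambda h}=f^1_{kj}(\lambda-s\lambda_0).
\end{equation*}
\end{itemize}

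\textbf{Case $\alpha=0$.} Write $R_Y^0=\dot\gamma$ using $\cos(\lambda_0h)=\tfrac12\sum_{s}e^{is\lambda_0h}$ and $\sin(\lambda_0h)=\tfrac1{2i}\sum_s s\,e^{is\lambda_0h}$. Apply the modulation rule term by term.
\begin{itemize}
\item The term $\tfrac12(\gamma_{11}^1+\gamma_{22}^1)\cos(\lambda_0 h)$ yields $\tfrac14\sum_s[f^1_{11}+f^1_{22}](\lambda-s\lambda_0)$.
\item The term $\tfrac12(\gamma^1_{21}-\gamma^1_{12})\sin(\lambda_0h)$ yields $\tfrac1{4i}\sum_s s\,[f^1_{12}-f^1_{21}](\lambda-s\lambda_0)$. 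By the symmetry facts this is $\pm\tfrac14\sum_s 2s\,\Im f^1_{12}(\lambda-s\lambda_0)$.
\end{itemize}
This reproduces the structure of \eqref{cs0}.

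\textbf{Case $\alpha=2\lambda_0$.} The CAF already carries the single factor $e^{i\lambda_0 h}$, so only the shift $s=1$ occurs. The term $\gamma^1_{11}-\gamma^1_{22}$ gives $f^1_{11}-f^1_{22}$ at $\lambda-\lambda_0$. The term $-i(\gamma^1_{21}+\gamma^1_{12})$ gives $-i(f^1_{12}+f^1_{21})=-2i\Re f^1_{12}$. This is \eqref{cs2l0}.

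\textbf{Case $\alpha=-2\lambda_0$.} Use $R_Y^{-2\lambda_0}=\overline{R_Y^{2\lambda_0}}$. Conjugating the defining series gives $S_Y^{-2\lambda_0}(\lambda)=\overline{S_Y^{2\lambda_0}(-\lambda)}$. Then use evenness and realness of $f^1_{jj}$ and of $\Re f^1_{12}$ (from $f^1_{12}(-\lambda)=\overline{f^1_{12}(\lambda)}$) to get the explicit form with argument $\lambda+\lambda_0$.

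\textbf{Main obstacles.} The real difficulty is not algebraic but bookkeeping and analysis.
\begin{itemize}
\item \emph{Sign of the $\Im f^1_{12}$ term.} This sign depends on three conventions: the kernel $e^{ih\lambda}$ versus $e^{-ih\lambda}$ in the two definitions, the ordering $Cov(b(n),b(n+h))$ versus $\mathbb E[Y(n+h)Y(n)]$ in Proposition \ref{acf}, and which of $\gamma^1_{12},\gamma^1_{21}$ is paired with which product of trigonometric factors. My own quick pass produced $+2s\Im f^1_{12}$ rather than the stated $-2s$. I would therefore recheck the convention $\gamma_{jk}(h)=Cov(b_j(n),b_k(n+h))$ against the proof of Proposition \ref{acf} carefully, and fix the sign accordingly.
\item \emph{Convergence.} When $H_1+H_2>1$ the series defining the spectra are not absolutely summable, since $\gamma^1_{jk}(h)\sim c\,h^{H_j+H_k-2}$. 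I would justify the interchange of the finite trigonometric sums with the series either in the $L^2$ sense or pointwise for $\lambda\neq \pm\lambda_0 \ (\mathrm{mod}\ 2\pi)$. The argument is that Theorem \ref{th3} identifies these series with the explicit densities \eqref{fij1}, and modulation by $e^{\pm i\lambda_0h}$ preserves that identification.
\end{itemize}
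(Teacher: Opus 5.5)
The overall route matches the paper's: substitute the CAFs, expand $\cos(\lambda_0 h)$ and $\sin(\lambda_0 h)$ into exponentials, and apply linearity and the modulation rule term by term. Your treatment of $\alpha=\pm 2\lambda_0$ is correct, since only $\gamma^1_{11}-\gamma^1_{22}$ and the symmetric combination $\gamma^1_{12}+\gamma^1_{21}$ enter there. Your derivation of the explicit $\alpha=-2\lambda_0$ formula from $\overline{S_Y^{2\lambda_0}(-\lambda)}$ is also more complete than the paper's.

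The gap is the case $\alpha=0$. You leave the sign of the $\Im f^1_{12}$ term as ``$\pm$''. With the inputs you actually use, the computation yields $\frac{1}{4i}\sum_s s\,(f^1_{12}-f^1_{21})(\lambda-s\lambda_0)=+\frac{1}{4}\sum_s 2s\,\Im f^1_{12}(\lambda-s\lambda_0)$, which contradicts \eqref{cs0}. Those inputs are $\dot\gamma$ exactly as in Proposition~\ref{acf} (with $\gamma^1_{21}-\gamma^1_{12}$) and your reflection rule $\sum_h\gamma^1_{jk}(h)e^{-i\lambda h}=f^1_{kj}(\lambda)$. So as written the proposal does not prove the statement, and ``fix the sign accordingly'' is exactly the missing step. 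The discrepancy is not arithmetic but a clash of conventions. The pairing in Proposition~\ref{acf} is not compatible with the convention $\gamma^1_{jk}(h)=Cov(b_j(n),b_k(n+h))$ of \eqref{gammaeq} that you adopt. Your reflection rule itself is correct for the kernel $e^{ih\lambda}$ in the definition of $f^1$.

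To close the gap, recompute the cross terms of $\mathbb{E}[Y(n+h)Y(n)]$ under your convention. Then $\mathbb{E}[b_1(n+h)b_2(n)]=\gamma^1_{21}(h)$ multiplies $\cos(\lambda_0(n+h))\sin(\lambda_0 n)$, and $\mathbb{E}[b_2(n+h)b_1(n)]=\gamma^1_{12}(h)$ multiplies $\sin(\lambda_0(n+h))\cos(\lambda_0 n)$. This is the reverse of what is written in the proof of Proposition~\ref{acf}. Hence $R_Y^0(h)=\frac{1}{2}[(\gamma^1_{11}(h)+\gamma^1_{22}(h))\cos(\lambda_0 h)+(\gamma^1_{12}(h)-\gamma^1_{21}(h))\sin(\lambda_0 h)]$. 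Your modulation and reflection rules now give $\frac{1}{4i}\sum_s s\,(f^1_{21}-f^1_{12})(\lambda-s\lambda_0)=-\frac{1}{4}\sum_s 2s\,\Im f^1_{12}(\lambda-s\lambda_0)$, which is \eqref{cs0}. The $n$-dependent part $\tilde\gamma$, and hence $R_Y^{\pm 2\lambda_0}$, is unchanged.

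The paper's proof reaches the same sign the other way round. It keeps Proposition~\ref{acf} as written, effectively taking $\gamma^1_{12}(h)=\mathbb{E}[b_1(n+h)b_2(n)]$. It then identifies the $e^{-i\lambda h}$-transform of $\gamma^1_{jk}$ with $f^1_{jk}$ itself (it writes $F_B=f^1_{21}-f^1_{12}$), so it implicitly uses the kernel $e^{-ih\lambda}$ for the spectral density. That pair of conventions defines the same function $f^1_{12}$ as yours, which is why both give $-2s$. Combining Proposition~\ref{acf} as written with the kernel $e^{ih\lambda}$ replaces $f^1_{12}$ by its conjugate and flips the sign.

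Your convergence remark for $H_j+H_k>1$ is a legitimate refinement that the paper does not address.
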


\begin{proof}
To prove the proposition, we use the linearity of the Fourier transform and the property $\mathcal{F}\{f(h) e^{i\omega h}\}(\lambda) = \mathcal{F}\{f(h)\} (\lambda - \omega)$.
    
    \paragraph{1. Case $\alpha = 0$}
 Recall that \begin{equation}
    R_Y^0(h) = \frac{1}{2} \underbrace{(\gamma_{11}^{1}(h) + \gamma_{22}^{1}(h))}_{A(h)} \cos(\lambda_0 h) + \frac{1}{2} \underbrace{(\gamma_{21}^{1}(h) - \gamma_{12}^{1}(h))}_{B(h)} \sin(\lambda_0 h).
\end{equation}
 Using the fact that $\mathcal{F}\{A(h)\cos(\lambda_0 h)\}(\lambda) = \frac{1}{2}[F_A(\lambda-\lambda_0) + F_A(\lambda+\lambda_0)]$, where $F_A(\lambda) = f_{11}^{1}(\lambda) + f_{22}^{1}(\lambda)$, we have the following
    \begin{equation}
        \mathcal{F}\left\{ \frac{A(h)}{2}\cos(\lambda_0 h) \right\}(\lambda) = \frac{1}{4} \left[ f_{11}^{1}(\lambda - \lambda_0) + f_{22}^{1}(\lambda - \lambda_0)  + f_{11}^{1}(\lambda + \lambda_0) + f_{22}^{1}(\lambda + \lambda_0)\right].
    \end{equation}
    Using $\mathcal{F}\{B(h)\sin(\lambda_0 h)\} = \frac{1}{2i}[F_B(\lambda-\lambda_0) - F_B(\lambda+\lambda_0)]$, where
    $F_B(\lambda) = f_{21}^{1}(\lambda) - f_{12}^{1}(\lambda) = \overline{f_{12}^{1}(\lambda)} - f_{12}^{1}(\lambda) = -2i \Im f_{12}^{1}(\lambda)$, we obtain
    \begin{align}
        \mathcal{F}\left\{ \frac{B(h)}{2}\sin(\lambda_0 h) \right\}(\lambda) &= \frac{1}{4i} \left[ (-2i \Im f_{12}^{1}(\lambda-\lambda_0)) - (-2i \Im f_{12}^{1}(\lambda+\lambda_0)) \right] \nonumber \\
        &= -\frac{1}{2} \Im f_{12}^{1}(\lambda-\lambda_0) + \frac{1}{2} \Im f_{12}^{1}(\lambda+\lambda_0).
    \end{align}
Combining these results, we obtain the final expression in the following form
\begin{equation}
    S_Y^0(\lambda) = \frac{1}{4} \sum_{s \in \{-1, 1\}} \left[ f_{11}^{1}(\lambda - s\lambda_0) + f_{22}^{1}(\lambda - s\lambda_0) - 2s \Im f_{12}^{1}(\lambda - s\lambda_0) \right].
\end{equation}

    \paragraph{2. Case $\alpha = 2\lambda_0$}
    The CAF at this frequency is given by
    \begin{equation}
        R_Y^{2\lambda_0}(h) = \frac{e^{i\lambda_0 h}}{4} \left[ (\gamma_{11}^{1}(h) - \gamma_{22}^{1}(h)) - i(\gamma_{21}^{1}(h) + \gamma_{12}^{1}(h)) \right].
    \end{equation}
    Taking the sum over $h\in\mathbb{Z}$ in the expression $\sum_{h \in \mathbb{Z}} R_Y^{2\lambda_0}(h) e^{-i\lambda h}$, we receive the following
    \begin{align}\label{s2l0first}
        S_Y^{2\lambda_0}(\lambda) &= \frac{1}{4} \sum_{h=-\infty}^{\infty} e^{-i(\lambda - \lambda_0)h} \left[ \gamma_{11}^{1}(h) - \gamma_{22}^{1}(h) - i(\gamma_{21}^{1}(h) + \gamma_{12}^{1}(h)) \right] \nonumber \\
        &= \frac{1}{4} \left[ f_{11}^{1}(\lambda - \lambda_0) - f_{22}^{1}(\lambda - \lambda_0) - i(f_{21}^{1}(\lambda - \lambda_0) + f_{12}^{1}(\lambda - \lambda_0)) \right].
    \end{align}
    Using the fact that $f_{21}^{1}(\lambda) = \overline{f_{12}^{1}(\lambda)}$, we have $f_{21}^{1}(\lambda) + f_{12}^{1}(\lambda) = 2\Re f_{12}^{1}(\lambda)$. Substituting this into Eq. \eqref{s2l0first} yields
    \begin{equation}
        S_Y^{2\lambda_0}(\lambda) = \frac{1}{4} \left[ f_{11}^{1}(\lambda - \lambda_0) - f_{22}^{1}(\lambda - \lambda_0) - 2i \Re f_{12}^{1}(\lambda - \lambda_0) \right].
    \end{equation}
    \paragraph{3. Case $\alpha=-2\lambda_0$}
 Since $R_Y^{-2\lambda_0}(h)=R_Y^{2\lambda_0}(h)$, we immediately have $S_Y^{-2\lambda_0}(\lambda) = \overline{S_Y^{2\lambda_0}(-\lambda)}$.
\end{proof}
\section{Asymptotic properties of CfGn}\label{sec4}
In this section, we calculate the asymptotic behavior of ACVF and the cyclic spectrum of the process $\{Y(n)\}_{n\in\mathbb{N}_0}$. Let us note that asymptotic properties are independent of using causal or well-balanced 2d fBm as an underlying process defined in Eq. (\ref{eq2}).
\begin{proposition}\label{autocovasympt}
    As the lag $h \to \infty$, the ACVF $\gamma_Y(n, h)$ of cfGn $\{Y(n)\}_{n\in\mathbb{N}_0}$ behaves as
    \begin{equation}
    \gamma_Y(n, h) \sim
        \begin{cases}
            \frac{c_{11}}{2} h^{2H_1-2} \left[ \cos(\lambda_0 h) + \cos(\lambda_0(2n+h)) \right] \text{ for } H_1 > H_2,\\
            \frac{c_{22}}{2} h^{2H_2-2} \left[ \cos(\lambda_0 h) - \cos(\lambda_0(2n+h)) \right] \text{ for } H_2 > H_1,\\
            h^{2H-2} \left[ A_{\text{stat}} \cos(\lambda_0 h - \phi_{\text{stat}}) + A_{\text{cyc}} \cos(\lambda_0(2n+h) - \phi_{\text{cyc}}) \right] \text{ for } H_1 = H_2 = H.
        \end{cases}
    \end{equation}


\noindent     where $c_{jk} = \sigma_j \sigma_k \rho_{jk} \frac{(H_j+H_k)(H_j+H_k-1)}{2}$, $A_{\text{stat}} = \frac{1}{2} \sqrt{ (c_{11}+c_{22})^2 + (c_{21}-c_{12})^2 }$, $\phi_{\text{stat}} = \arctan\left( \frac{c_{21}-c_{12}}{c_{11}+c_{22}} \right)$, \newline $A_{\text{cyc}} = \frac{1}{2} \sqrt{ (c_{11}-c_{22})^2 + (c_{21}+c_{12})^2 }$, $\phi_{\text{cyc}} = \arctan\left( \frac{c_{21}+c_{12}}{c_{11}-c_{22}} \right)$.
\end{proposition}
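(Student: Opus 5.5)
The plan is to start from the decomposition of Proposition \ref{acf}, $\gamma_Y(n,h)=\dot\gamma(h)+\tilde\gamma(n,h)$, and replace each $\gamma^1_{jk}(h)$ by its large-lag asymptotics. The core computation is a second-order difference expansion of the covariance in Eq. \eqref{gammaeq}. For fixed large $h>1$ the signs $\operatorname{sign}(h\pm1)$ and $\operatorname{sign}(h)$ are all $+1$. Hence in the non-logarithmic case $H_j+H_k\neq 1$ the weight $w_{jk}$ is the constant $\rho_{jk}-\eta_{jk}$ on all three terms. We then get
\begin{equation*}
\gamma^1_{jk}(h)=\frac{\sigma_j\sigma_k(\rho_{jk}-\eta_{jk})}{2}\left[(h+1)^{H_j+H_k}+(h-1)^{H_j+H_k}-2h^{H_j+H_k}\right].
\end{equation*}
A Taylor expansion of $(1\pm 1/h)^{\beta}$ to second order gives $\beta(\beta-1)h^{\beta-2}(1+O(h^{-2}))$ for the bracket, with $\beta=H_j+H_k$. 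Therefore $\gamma^1_{jk}(h)\sim c_{jk}h^{H_j+H_k-2}$, where $c_{jk}$ is as in the statement, with the factor $\rho_{jk}$ understood to include the asymmetry correction $\eta_{jk}$ in the causal case.

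For the diagonal terms this gives $\gamma^1_{jj}(h)\sim c_{jj}h^{2H_j-2}$. For the cross terms the exponent is $H_1+H_2-2$. When $H_1+H_2=1$ the weight $w$ contains a $\log|u|$ factor. I would handle this case separately by expanding $(h\pm1)\log(h\pm1)$. The constant part cancels in the second difference, and the logarithmic part contributes a term of order $h^{-1}=h^{H_1+H_2-2}$. This has the same power as the generic case, so only the constant changes.

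Next I would compare exponents. If $H_1>H_2$, then $2H_1-2>H_1+H_2-2>2H_2-2$. So in both $\dot\gamma$ and $\tilde\gamma$, the $\gamma^1_{11}$ term dominates and every other term is $o(h^{2H_1-2})$. Since both cosine factors are bounded, substituting gives
\begin{equation*}
\gamma_Y(n,h)=\frac{c_{11}}{2}h^{2H_1-2}\left[\cos(\lambda_0h)+\cos(\lambda_0(2n+h))\right]+o(h^{2H_1-2}).
\end{equation*}
The case $H_2>H_1$ is symmetric. The only change is that $\gamma^1_{22}$ enters $\tilde\gamma$ with a minus sign.

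If $H_1=H_2=H$, all four covariances have order $h^{2H-2}$. Then
\begin{equation*}
\dot\gamma(h)\approx\tfrac12 h^{2H-2}\left[(c_{11}+c_{22})\cos(\lambda_0h)+(c_{21}-c_{12})\sin(\lambda_0h)\right].
\end{equation*}
Writing $p\cos x+q\sin x=\sqrt{p^2+q^2}\cos(x-\arctan(q/p))$ gives $A_{\rm stat}$ and $\phi_{\rm stat}$. The same step applied to $\tilde\gamma$ gives $A_{\rm cyc}$ and $\phi_{\rm cyc}$. Here the arctan should be read as the two-argument version, or else we assume positive denominators.

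The main obstacle is making ``$\sim$'' meaningful. The leading term oscillates and vanishes at some $h$ or $n$, so a literal ratio limit fails. I would therefore state and prove the result in the form
\begin{equation*}
\gamma_Y(n,h)=h^{\kappa}\left[\text{leading trigonometric expression}\right]+o(h^{\kappa}),
\end{equation*}
uniformly in $n$, where $\kappa$ is the relevant exponent. Along the way I would check that the constants match the statement's $c_{jk}$, including the causal $\eta_{12}=-\eta_{21}$ contribution to $c_{21}-c_{12}$.
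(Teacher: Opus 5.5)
Your proposal is correct and follows essentially the same route as the paper. You start from the decomposition $\gamma_Y=\dot{\gamma}+\tilde{\gamma}$, use $\gamma^1_{jk}(h)\sim c_{jk}h^{H_j+H_k-2}$, keep only the slowest-decaying term when $H_1\neq H_2$, and apply harmonic addition when $H_1=H_2$. Your extra care only sharpens the paper's argument: you derive the component asymptotics by a second-difference Taylor expansion (the paper merely asserts them); you read the arctan as two-argument, which is genuinely needed since $c_{11}+c_{22}<0$ for $H<1/2$; and you state the result as $h^{\kappa}[\cdot]+o(h^{\kappa})$ uniformly in $n$. Note also that $\eta_{12}=\eta_{21}=0$ when $H_1=H_2$, so the causal asymmetry correction you plan to check never enters the final formula.
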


\begin{proof}
    Let us note that the ACVF of 2d fGn components behaves asymptotically as $\gamma_{jk}^{1}(h) \sim c_{jk} h^{H_j+H_k-2}$. Now, based on Eq. \eqref{yautocov}, we analyze three cases depending on the values of $ H_1$ and $H_2$.
    \paragraph{1. Case $H_1 > H_2$}
    In this case, the term $\gamma_{11}^{1}(h) \sim h^{2H_1-2}$ decays the slowest, and all other terms involving $H_2$ are negligible as $h \to \infty$. Thus, we have
    \begin{align}
        \gamma_Y(n, h) &\sim \cos(\lambda_0(n+h))\cos(\lambda_0 n) \gamma_{11}^{1}(h) \nonumber \sim \frac{1}{2}[\cos(\lambda_0(2n+h)) + \cos(\lambda_0 h)] c_{11} h^{2H_1-2}.
    \end{align}

    \paragraph{2. Case $H_2 > H_1$}
    In this case, the term $\gamma_{22}^{1}(h) \sim h^{2H_2-2}$ decays the slowest, and all other terms involving $H_1$ are negligible as $h \to \infty$. Thus, we have
    \begin{align}
        \gamma_Y(n, h) &\sim \sin(\lambda_0(n+h))\sin(\lambda_0 n) \gamma_{22}^{1}(h) \sim \frac{1}{2}[\cos(\lambda_0 h) - \cos(\lambda_0(2n+h))] c_{22} h^{2H_2-2}.
    \end{align}

    \paragraph{3. Case $H_1 = H_2 = H$}
    In this case, all components decay at the same rate $h^{2H-2}$. We analyze $\dot{\gamma}(h)$ and $\tilde{\gamma}(n, h)$ separately. For $\dot{\gamma}(h)$, we have
    \begin{equation}
        \dot{\gamma}(h) \sim \frac{h^{2H-2}}{2} \left[ (c_{11}+c_{22})\cos(\lambda_0 h) + (c_{21}-c_{12})\sin(\lambda_0 h) \right].
    \end{equation}
    Using the identity $r_1\cos(\theta) + r_2\sin(\theta) = \sqrt{r_1^2+r_2^2}\cos(\theta - \arctan(r_2/r_1))$, with $r_1 = c_{11}+c_{22}$ and $r_2 = c_{21}-c_{12}$, we obtain $A_{\text{stat}}$ and $\phi_{\text{stat}}$. For $\tilde{\gamma}(n, h)$, we have
    \begin{equation}
        \tilde{\gamma}(n, h) \sim \frac{h^{2H-2}}{2} \left[ (c_{11}-c_{22})\cos(\lambda_0(2n+h)) + (c_{21}+c_{12})\sin(\lambda_0(2n+h)) \right].
    \end{equation}
    Similarly, setting $r_1 = c_{11}-c_{22}$ and $r_2 = c_{21}+c_{12}$ yields $A_{\text{cyc}}$ and $\phi_{\text{cyc}}$.
\end{proof}
Next, we analyze the asymptotics of the cyclic spectrum of cfGn at the frequency $2\lambda_0$ given in Eq. \eqref{cs2l0}.
\begin{proposition}\label{csasympt}
Let $d_{\max} = \max(d_1, d_2)$, where $d_1$, $d_2$ are diagonal elements of matrix $D$ given in Eq. \eqref{matd}. Then, the cyclic spectrum of cfGn at cyclic frequency $2\lambda_0$ given in Eq. \eqref{cs2l0} behaves like
    \begin{equation}
        S_Y^{2\lambda_0}(\lambda) \sim \mathcal{K} |\lambda - \lambda_0|^{-2d_{\max}}, \quad \text{as } \lambda \to \lambda_0,
    \end{equation}
    where the coefficient $\mathcal{K}$ is given by
    \begin{equation}
        \mathcal{K} = \begin{cases}
            \frac{\tilde{c}_{11}}{4} \text{ for } d_1>d_2,\\
            -\frac{\tilde{c}_{22}}{4} \text{ for } d_2>d_1,\\
            \frac{1}{4} \left[ \tilde{c}_{11} - \tilde{c}_{22} - 2i \Re \tilde{c}_{12} \right] \text{ for } d_1=d_2=d.
        \end{cases}
    \end{equation}
    where $\tilde{c}_{ij}$, $i,j=1,2$ are given in Eq. \eqref{matc}.
\end{proposition}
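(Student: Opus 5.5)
Starting from the explicit formula for the cyclic spectrum at $\alpha=2\lambda_0$ in Eq. \eqref{cs2l0}, I will set $\mu=\lambda-\lambda_0$, so that $\lambda\to\lambda_0$ becomes $\mu\to0$ and
\begin{equation*}
S_Y^{2\lambda_0}(\lambda)=\frac14\left[f^1_{11}(\mu)-f^1_{22}(\mu)-2i\,\Re f^1_{12}(\mu)\right].
\end{equation*}
Theorem \ref{th3} gives the element-wise asymptotics $f^1_{jk}(\mu)\sim\tilde c_{jk}\,\mu^{-(d_j+d_k)}$ as $\mu\to0$. To cover both sides of $\lambda_0$, I will work with the representation \eqref{fij1} directly. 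For $\mu\to0$, only the $n=0$ term is singular. Since $|1-e^{-i\mu}|^2\sim\mu^2$, it contributes $\tilde c_{jk}\mu_+^{-(d_j+d_k)}+\bar{\tilde c}_{jk}\mu_-^{-(d_j+d_k)}$. The remaining terms form a series that is bounded near $0$ (they are $O(\mu^2)$). Hence $f^1_{jk}(\mu)=\tilde c_{jk}|\mu|^{-(d_j+d_k)}(1+o(1))$ for $\mu\downarrow0$, and the same holds with $\bar{\tilde c}_{jk}$ for $\mu\uparrow0$. For the diagonal entries $\tilde c_{jj}$ is real, so $f^1_{jj}(\mu)\sim\tilde c_{jj}|\mu|^{-2d_j}$ from both sides. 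For the real part of the cross term, $\Re\tilde c_{12}=\Re\bar{\tilde c}_{12}$, so $\Re f^1_{12}(\mu)\sim\Re\tilde c_{12}|\mu|^{-(d_1+d_2)}$ from both sides as well.

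I will then compare exponents in three cases. If $d_1>d_2$, then $2d_1>d_1+d_2>2d_2$. The term $f^1_{11}$ dominates, while $f^1_{22}$ and $\Re f^1_{12}$ are $o(|\mu|^{-2d_1})$, which gives $\mathcal K=\tilde c_{11}/4$. If $d_2>d_1$, the term $-f^1_{22}$ dominates, which gives $\mathcal K=-\tilde c_{22}/4$. If $d_1=d_2=d$, all three terms are of the same order $|\mu|^{-2d}$. Summing their leading coefficients gives $\mathcal K=\frac14[\tilde c_{11}-\tilde c_{22}-2i\Re\tilde c_{12}]$.

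The main obstacle lies in several degenerate situations. First, a negative exponent $d_j<0$ (the case $H_j<1/2$) has to be handled so that the claimed dominance still makes sense. In that case the spectrum tends to $0$, and the "slowest-vanishing" term is still the one with the largest $-2d$ exponent, i.e. the one with $d_{\max}$. Second, I need to check that the tail of the series in \eqref{fij1} is genuinely bounded near $\mu=0$ uniformly. This follows because for $n\neq0$ the summands behave like $|2\pi n|^{-1-2d}$, which are summable, and they are multiplied by $|1-e^{-i\mu}|^2\to0$. Third, when $d_1=d_2$ the leading coefficient $\mathcal K$ might vanish, for example if $\tilde c_{11}=\tilde c_{22}$ and $\Re\tilde c_{12}=0$. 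In that case the relation $\sim$ has to be read as a statement about the leading term only, and I will state this proviso explicitly.
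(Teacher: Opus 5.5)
Your proposal is correct and follows the same route as the paper. Like the paper, you set $\omega=\lambda-\lambda_0$, insert the element-wise asymptotics $f^1_{jk}(\omega)\sim\tilde c_{jk}|\omega|^{-(d_j+d_k)}$ into Eq.~\eqref{cs2l0}, and compare exponents in the three cases. Your extra care goes beyond what the paper writes: you treat both one-sided limits via \eqref{fij1}, you address the sign of $d_j$, and you flag the degenerate case $\mathcal K=0$. That case occurs exactly when $H_1=H_2$, $\sigma_1=\sigma_2$ and $\rho=0$, and there $S_Y^{2\lambda_0}\equiv0$.

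One small slip: for $n\neq0$ the summands in \eqref{fij1} behave like $|2\pi n|^{-1-H_j-H_k}=|2\pi n|^{-2-(d_j+d_k)}$, not $|2\pi n|^{-1-2d}$. It is this exponent that makes the tail summable for all $H_j,H_k\in(0,1)$; your exponent would only give summability when $H_j+H_k>1$.
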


\begin{proof}
    We start with the exact expression for the cyclic spectrum at the cycle frequency $\alpha = 2\lambda_0$
    \begin{equation}
        S_Y^{2\lambda_0}(\lambda) = \frac{1}{4} \left[ f_{11}^{1}(\lambda - \lambda_0) - f_{22}^{1}(\lambda - \lambda_0) - 2i \Re f_{12}^{1}(\lambda - \lambda_0) \right].
    \end{equation}
    
    Let $\omega = \lambda - \lambda_0$. We analyze the behavior of $S_Y^{2\lambda_0}(\lambda)$ as $\omega \to 0$ substituting the given element-wise asymptotics $f_{jk}(\omega) \sim \tilde{c}_{jk} |\omega|^{-(d_j + d_k)}$.
    
    \text{1. Case $d_1 > d_2$.}
    In this case, we have $2d_1 > d_1 + d_2 > 2d_2$.
    Therefore, $|\omega|^{-2d_1}$ is the dominant term and we have
    \begin{align}
        S_Y^{2\lambda_0}(\lambda) &= \frac{1}{4} \left[ \tilde{c}_{11}|\omega|^{-2d_1} - O(|\omega|^{-2d_2}) - O(|\omega|^{-(d_1+d_2)}) \right] \sim \frac{\tilde{c}_{11}}{4} |\lambda - \lambda_0|^{-2d_1}.
    \end{align}

    \text{2. Case $d_2 > d_1$.}
    Here $2d_2$ is the largest exponent. The term $-f_{22}$ dominates, and thus we have
    \begin{align}
        S_Y^{2\lambda_0}(\lambda) &= \frac{1}{4} \left[ O(|\omega|^{-2d_1}) - \tilde{c}_{22}|\omega|^{-2d_2} - O(|\omega|^{-(d_1+d_2)}) \right] \sim -\frac{\tilde{c}_{22}}{4} |\lambda - \lambda_0|^{-2d_2}.
    \end{align}

    \text{3. Case $d_1 = d_2 = d$.}
    All terms diverge at the same rate $|\omega|^{-2d}$, and thus we obtain
    \begin{align}
        S_Y^{2\lambda_0}(\lambda) &\sim \frac{1}{4} \left[ \tilde{c}_{11}|\omega|^{-2d} - \tilde{c}_{22}|\omega|^{-2d} - 2i \operatorname{Re}(\tilde{c}_{12} |\omega|^{-2d}) \right] = \frac{|\lambda - \lambda_0|^{-2d}}{4} \left[ \tilde{c}_{11} - \tilde{c}_{22} - 2i \operatorname{Re}(\tilde{c}_{12}) \right].
    \end{align}
\end{proof}
\section{Numerical simulations}\label{sec5}
In this section, the theoretical results presented in the previous sections are compared with the results obtained through Monte Carlo simulations. For all statistics, we used 10000 Monte Carlo simulations and applied well-balanced and causal 2d fGn as the underlying process. For all simulations, we fix the parameters $\lambda_0=0.1\pi$ and $\rho\in\{-0.15, 0, 0.15\}$. In Figs. \ref{fig::trajectories}-\ref{fig::csim} we set $H_1=0.4$, $H_2=0.7$. For the asymptotic results shown in Figs. \ref{fig::covasympt} and \ref{fig::csasympt}, we set the parameters to $(H_1, H_2)\in\{(0.85, 0.4), (0.4, 0.85), (0.75, 0.75)\}$.

First, in Fig. \ref{fig::trajectories} we present example trajectories of cfGn $\{Y(n)\}_{n\in\mathbb{N}_0}$. Fig.~\ref{fig::autocavariance} illustrates a comparison between the theoretical ACVF of the process $\{Y(n)\}_{n\in\mathbb{N}_0}$, as defined in Eq. \eqref{yautocov}, and the empirical one derived from Monte Carlo simulations. Across all examined cases, the analytical expressions demonstrate excellent agreement with the simulated trajectories.

A similar comparative analysis is presented in Figs. \ref{fig::cafre} and \ref{fig::cafim}, which depict the real and imaginary parts of the CAF at frequency $2\lambda_0$, respectively.  Also, for this quantity, the agreement between the theoretical and simulation results is clearly visible. 

The spectral characteristics are further evaluated in Figs. \ref{fig::cs0} through \ref{fig::csim}. Specifically, Fig. \ref{fig::cs0} validates the theoretical cyclic spectrum at frequency $0$ against empirical estimates. Figs. \ref{fig::csre} and \ref{fig::csim} provide the real and imaginary parts of the cyclic spectrum at frequency $2\lambda_0$, see Eq. \eqref{cs2l0}. While the theoretical and empirical results align closely for $\rho \in\{-0.15,0.15\}$, a slight divergence is observed in the imaginary component when $\rho=0$. However, since the theoretical value in this instance is null, the observed discrepancy is attributed to inherent stochastic approximation errors (simulation noise) rather than a model mismatch.

Finally, the asymptotic behavior of the cfGn is examined. Fig. \ref{fig::covasympt} shows that the theoretical asymptotics of the ACVF from Proposition \ref{autocovasympt} align precisely with the simulated results. In Fig. \ref{fig::csasympt}, the absolute value of the CAF asymptotics at $2\lambda_0$ (Proposition \ref{csasympt}) is compared with empirical counterparts. Consistent with previous observations, the theoretical framework holds for all cases except when $\rho=0$ and $H_1=H_2$. In this specific configuration, the theoretical value is zero, and the minor deviations observed in the simulations are representative of numerical noise limits.

\begin{figure}[ht!]
       \centering
         \includegraphics[width=1\textwidth, height=0.5\textheight]{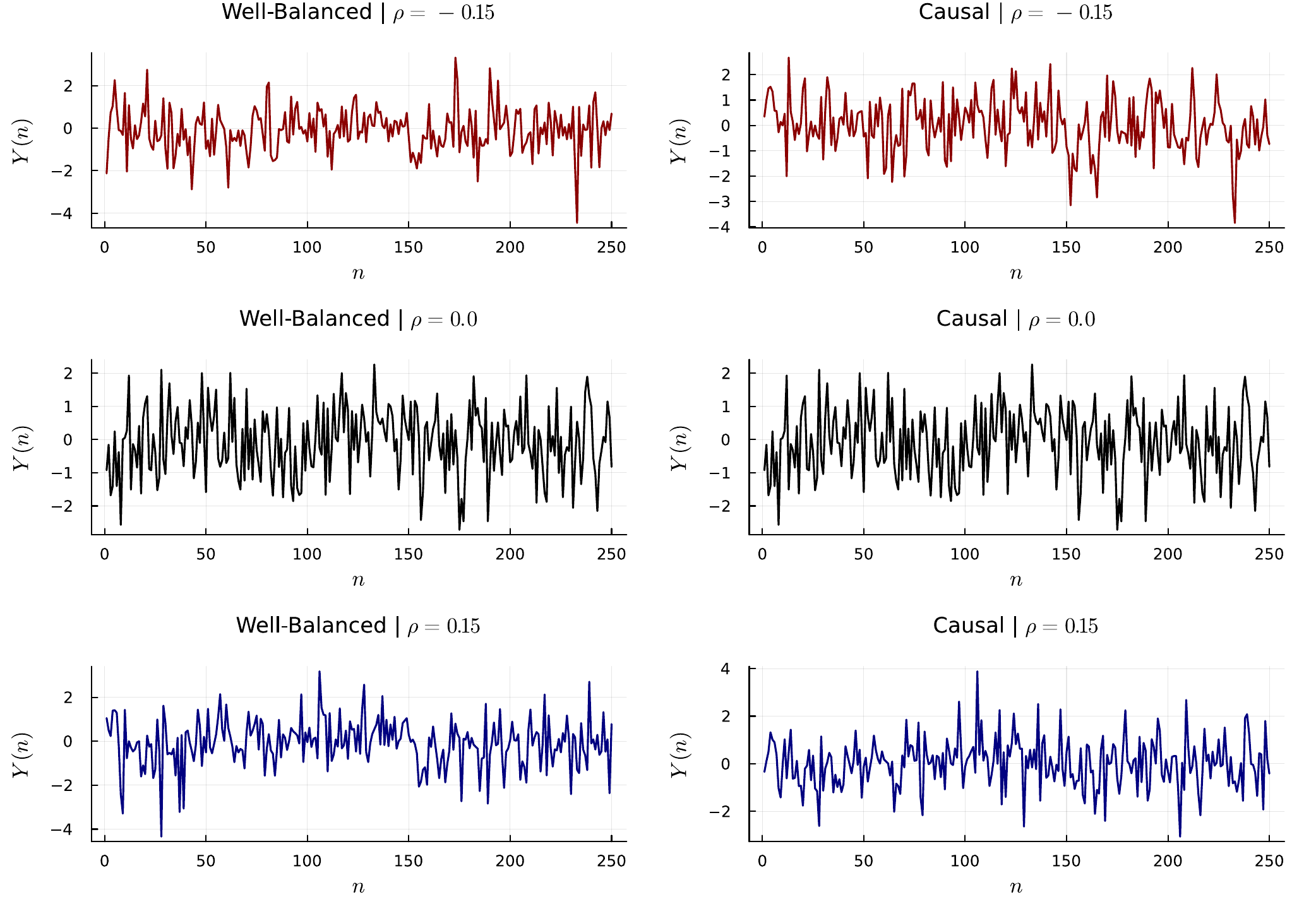}
        \caption{Example trajectories of the cfGn for different values of $\rho$, with causal and well-balanced 2d fGn as the underlying process, for $H_1=0.4$, $H_2=0.7$, $\lambda_0=0.1\pi$.}\label{fig::trajectories}
\end{figure}
\begin{figure}[ht!]
       \centering
         \includegraphics[width=1\textwidth, height=0.5\textheight]{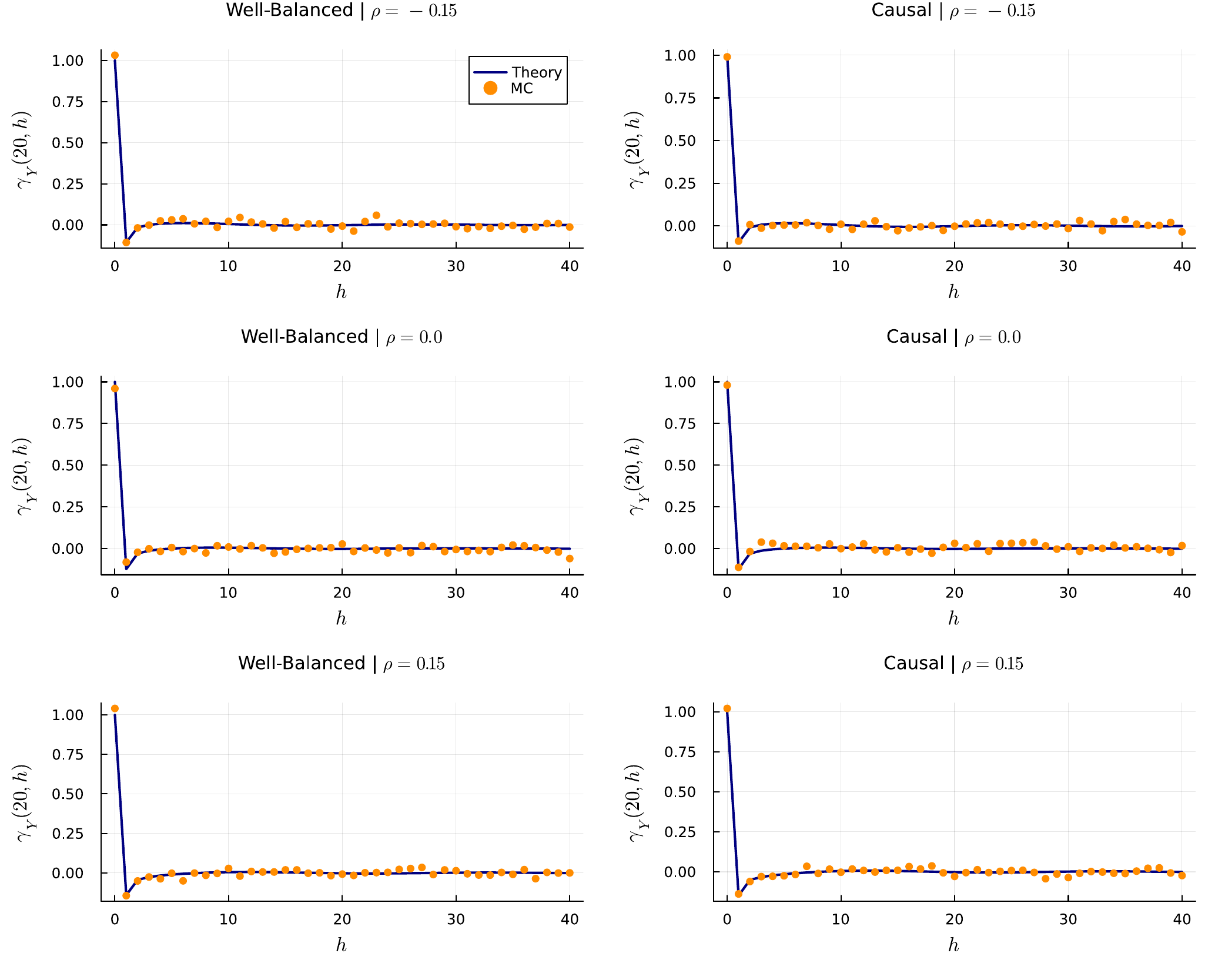}
        \caption{Comparison of theoretical ACVF given in Eq. \eqref{yautocov} and its Monte Carlo simulated counterpart for different values of $\rho$, with causal and well-balanced 2d fGn as the underlying process, for $H_1=0.4$, $H_2=0.7$, $\lambda_0=0.1\pi$, $n=20$ and $10000$ Monte Carlo simulations.}\label{fig::autocavariance}
\end{figure}
\begin{figure}[ht!]
       \centering
         \includegraphics[width=1\textwidth, height=0.5\textheight]{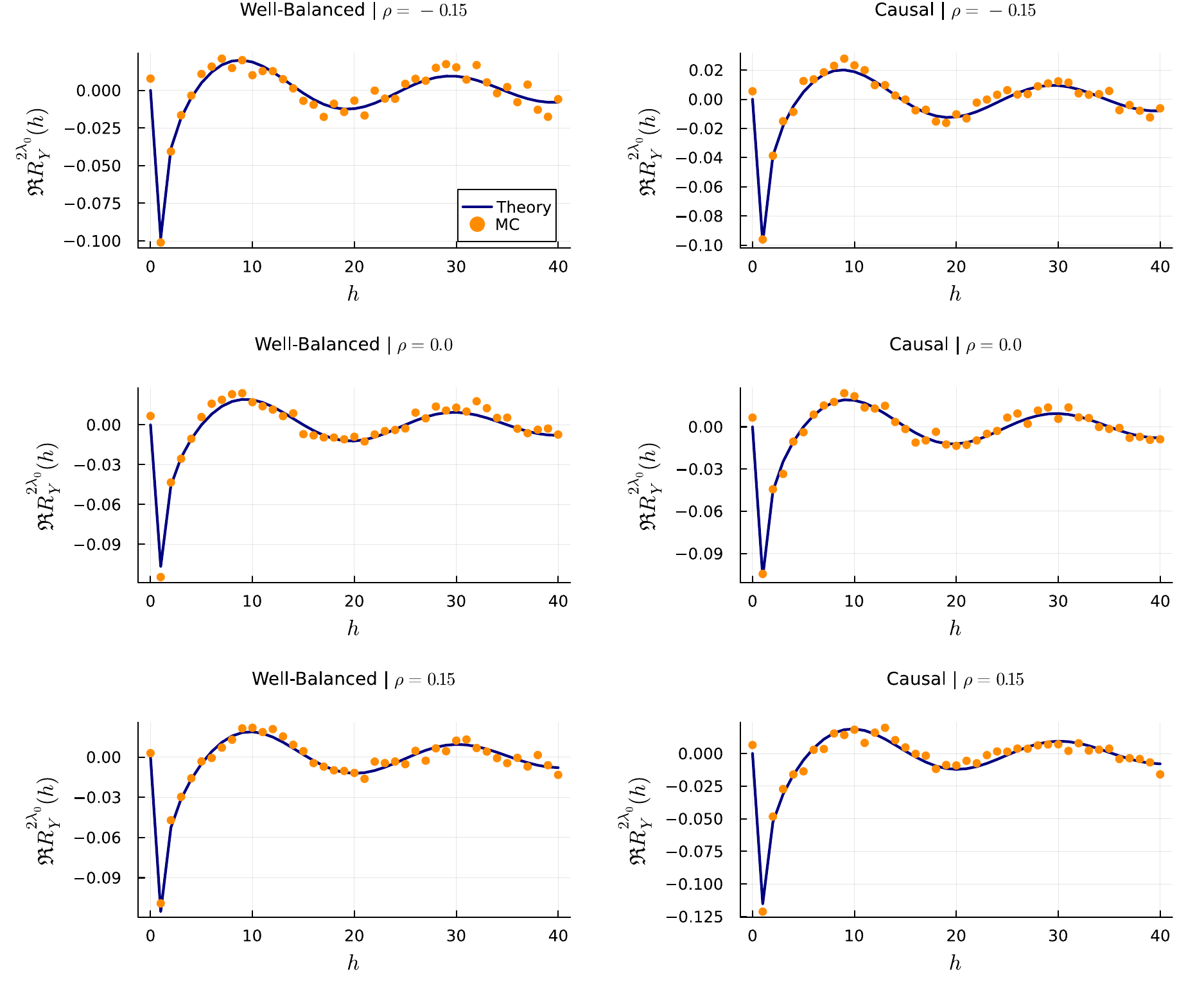}
        \caption{Comparison of theoretical real part of CAF at frequency $2\lambda_0$ given in Proposition \ref{cafprop} and its Monte Carlo simulated counterpart for different values of $\rho$, with causal and well-balanced 2d fGn as the underlying process, for $H_1=0.4$, $H_2=0.7$, $\lambda_0=0.1\pi$, $n=20$ and $10000$ Monte Carlo simulations.}\label{fig::cafre}
\end{figure}
\begin{figure}[ht!]
       \centering
         \includegraphics[width=1\textwidth, height=0.5\textheight]{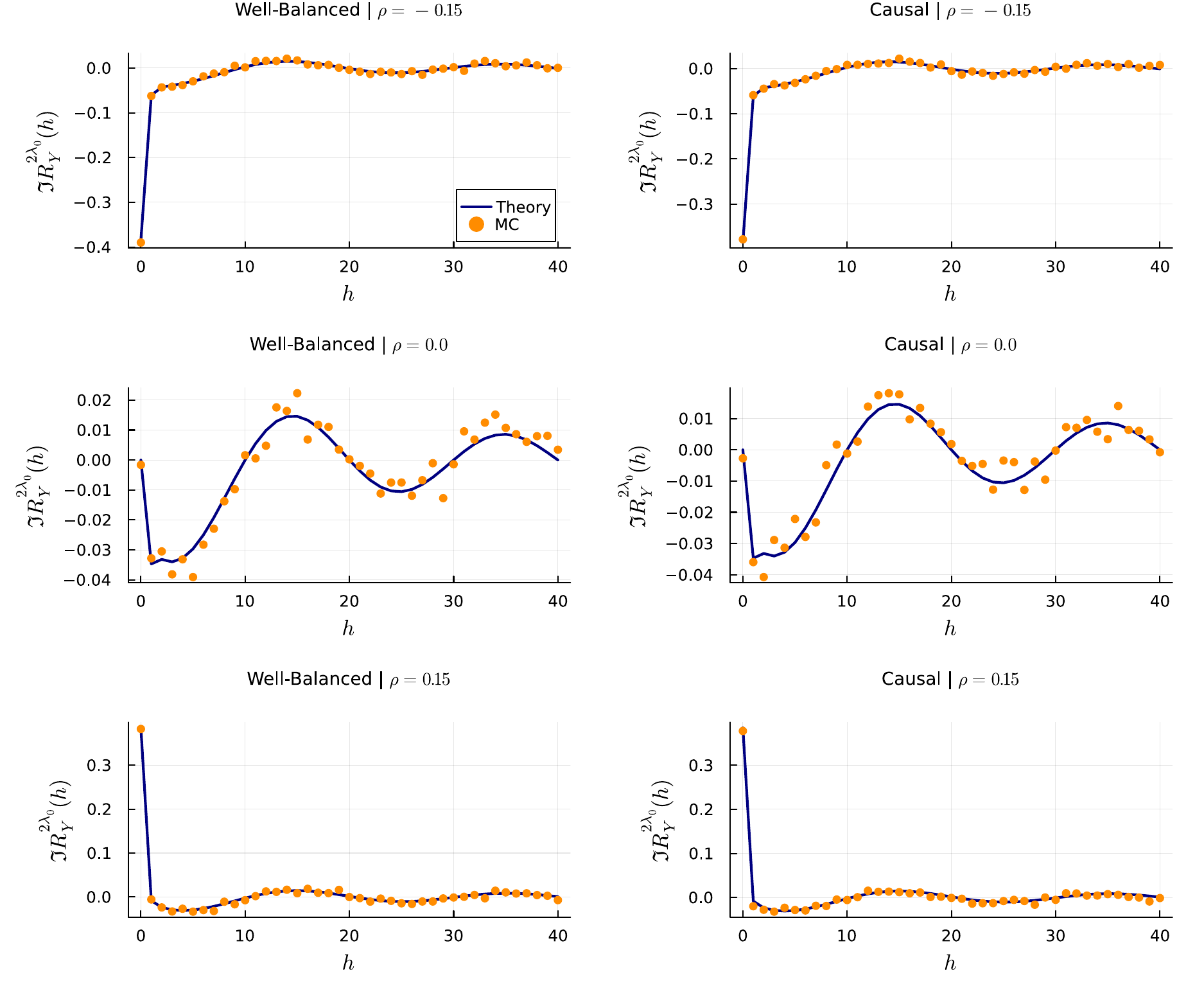}
        \caption{Comparison of theoretical imaginary part of CAF at frequency $2\lambda_0$ given in Proposition \ref{cafprop} and its Monte Carlo simulated counterpart for different values of $\rho$, with causal and well-balanced 2d fGn as the underlying process, for $H_1=0.4$, $H_2=0.7$, $\lambda_0=0.1\pi$, $n=20$ and $10000$ Monte Carlo simulations.}\label{fig::cafim}
\end{figure}
\begin{figure}[ht!]
       \centering
         \includegraphics[width=1\textwidth, height=0.5\textheight]{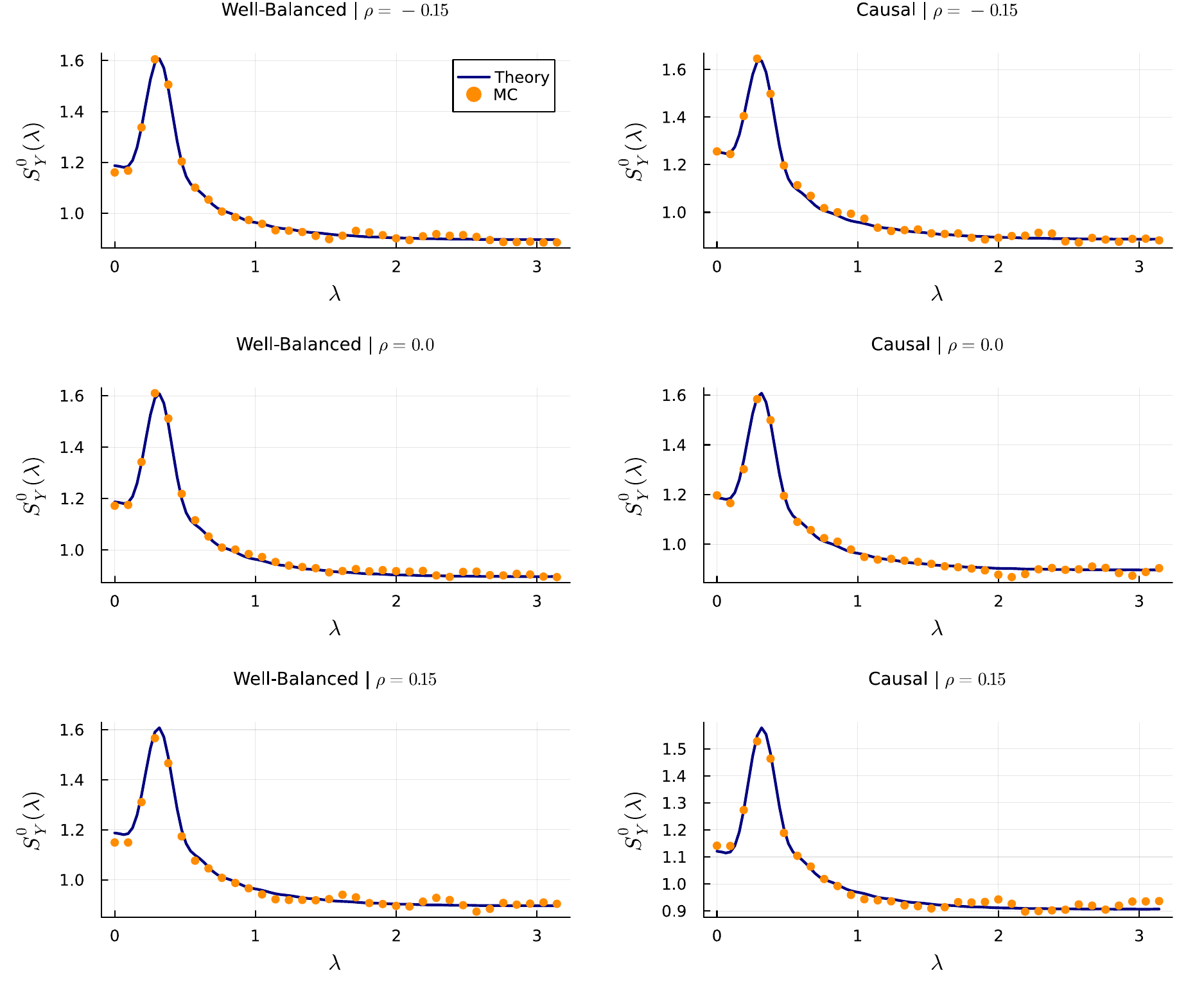}
        \caption{Comparison of theoretical cyclic spectrum at frequency $0$ given in Eq. \eqref{cs0} and ts Monte Carlo simulated counterpart for different values of $\rho$, with causal and well-balanced 2d fGn as the underlying process, for $H_1=0.4$, $H_2=0.7$, $\lambda_0=0.1\pi$, $n=20$ and $10000$ Monte Carlo simulations.}\label{fig::cs0}
\end{figure}
\begin{figure}[ht!]
       \centering
         \includegraphics[width=1\textwidth, height=0.5\textheight]{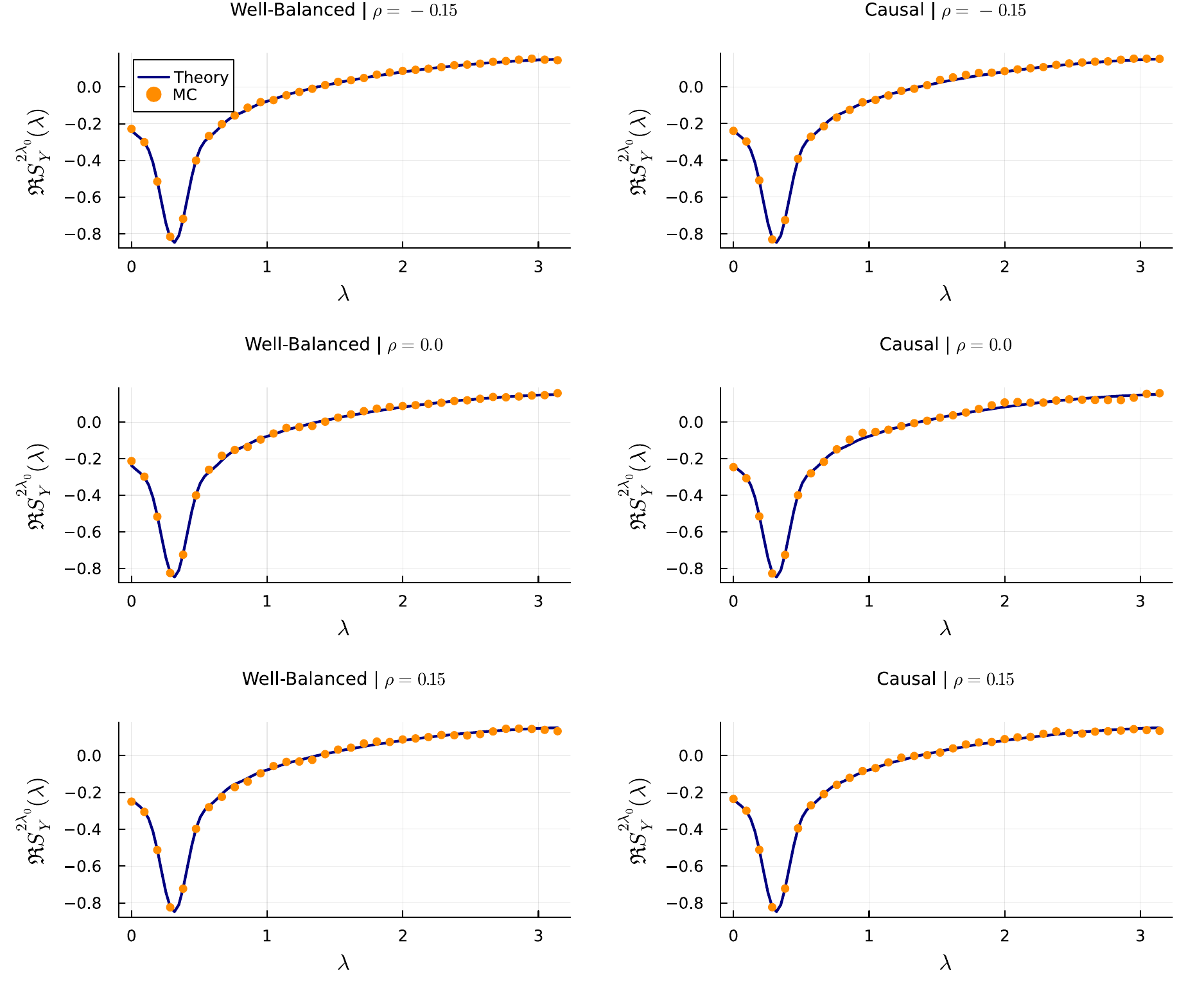}
        \caption{Comparison of theoretical real part of cyclic spectrum at frequency $2\lambda_0$ given in Eq. \eqref{cs2l0} and its Monte Carlo simulated counterpart for different values of $\rho$, with causal and well-balanced 2d fGn as the underlying process, for $H_1=0.4$, $H_2=0.7$, $\lambda_0=0.1\pi$, $n=20$ and $10000$ Monte Carlo simulations.}\label{fig::csre}
\end{figure}
\begin{figure}[ht!]
       \centering
         \includegraphics[width=1\textwidth, height=0.5\textheight]{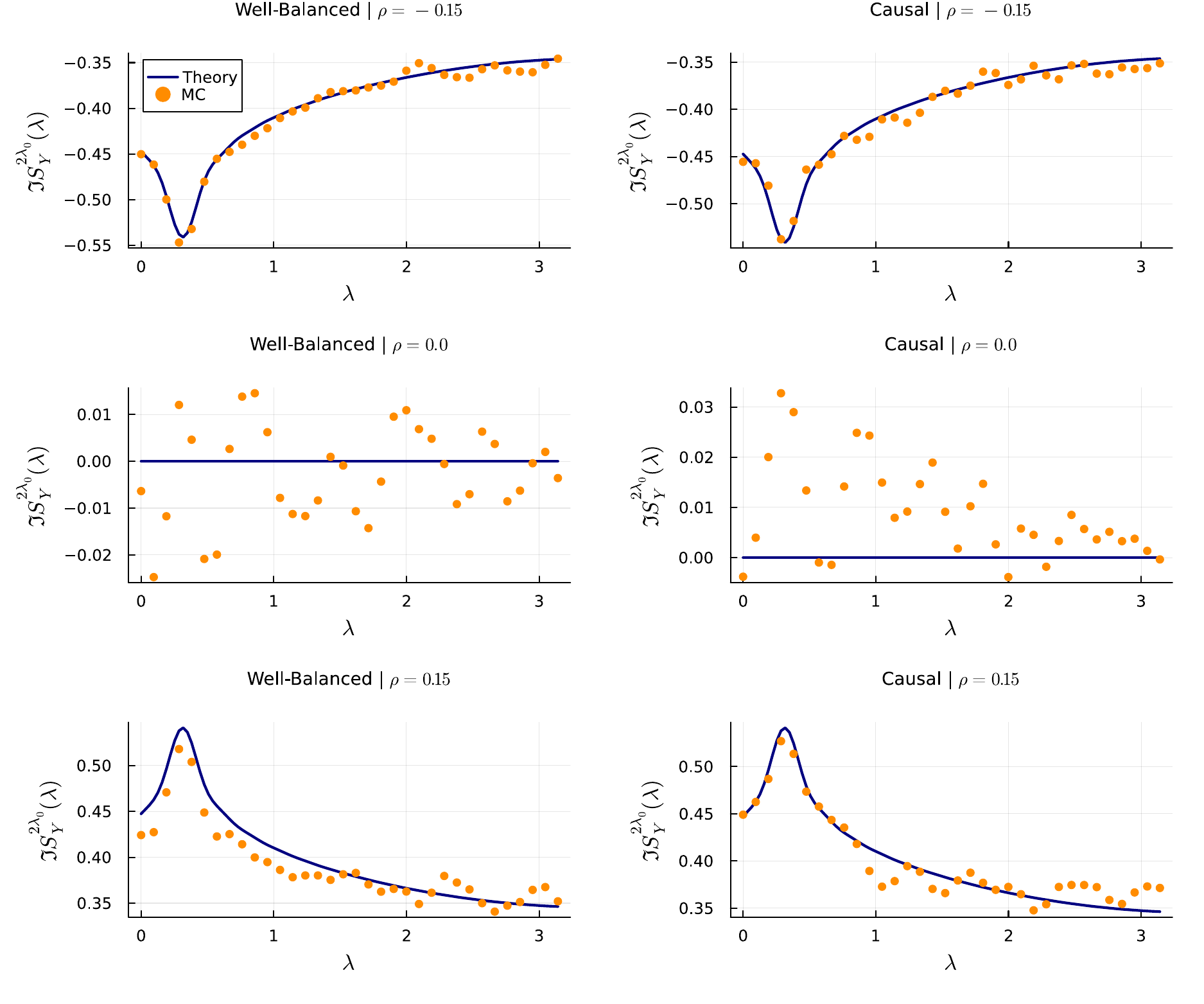}
        \caption{Comparison of theoretical imaginary part of cyclic spectrum at frequency $2\lambda_0$ given in Eq. \eqref{cs2l0} and its Monte Carlo simulated counterpart for different values of $\rho$, with causal and well-balanced 2d fBm as the underlying process, for $H_1=0.4$, $H_2=0.7$, $\lambda_0=0.1\pi$, $n=20$ and $10000$ Monte Carlo simulations.}\label{fig::csim}
\end{figure}
\begin{figure}[ht!]
       \centering
         \includegraphics[width=1\textwidth, height=1\textheight]{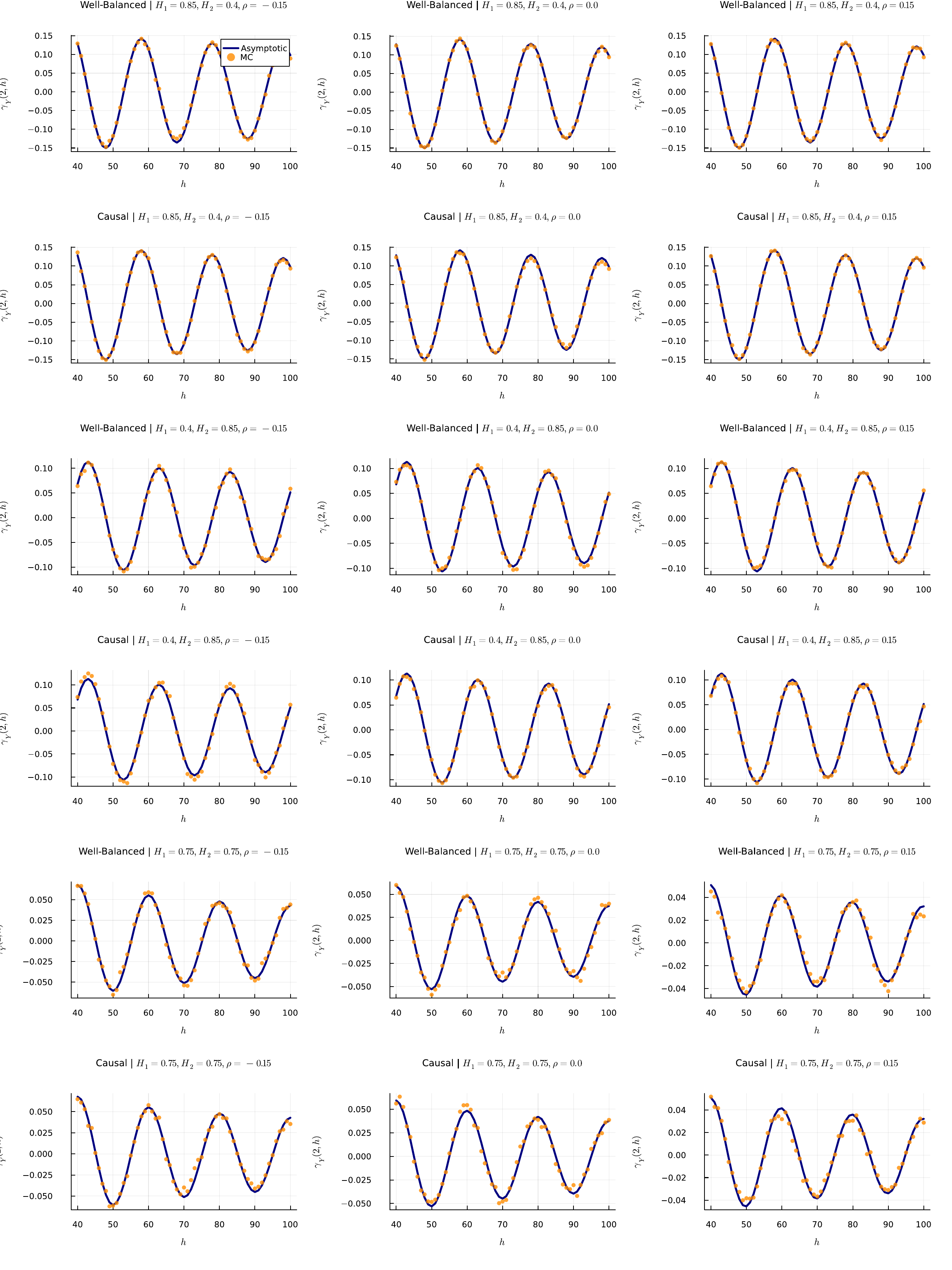}
        \caption{Comparison of theoretical asymptotic of the ACVF given in Proposition \ref{autocovasympt} and its Monte Carlo simulated counterpart for different values of $\rho, H_1, H_2$, with causal and well-balanced 2d fGn as the underlying process, for $\lambda_0=0.1\pi$, $n=2$ and $50000$ Monte Carlo simulations.}\label{fig::covasympt}
\end{figure}
\begin{figure}[ht!]
       \centering
         \includegraphics[width=1\textwidth, height=1\textheight]{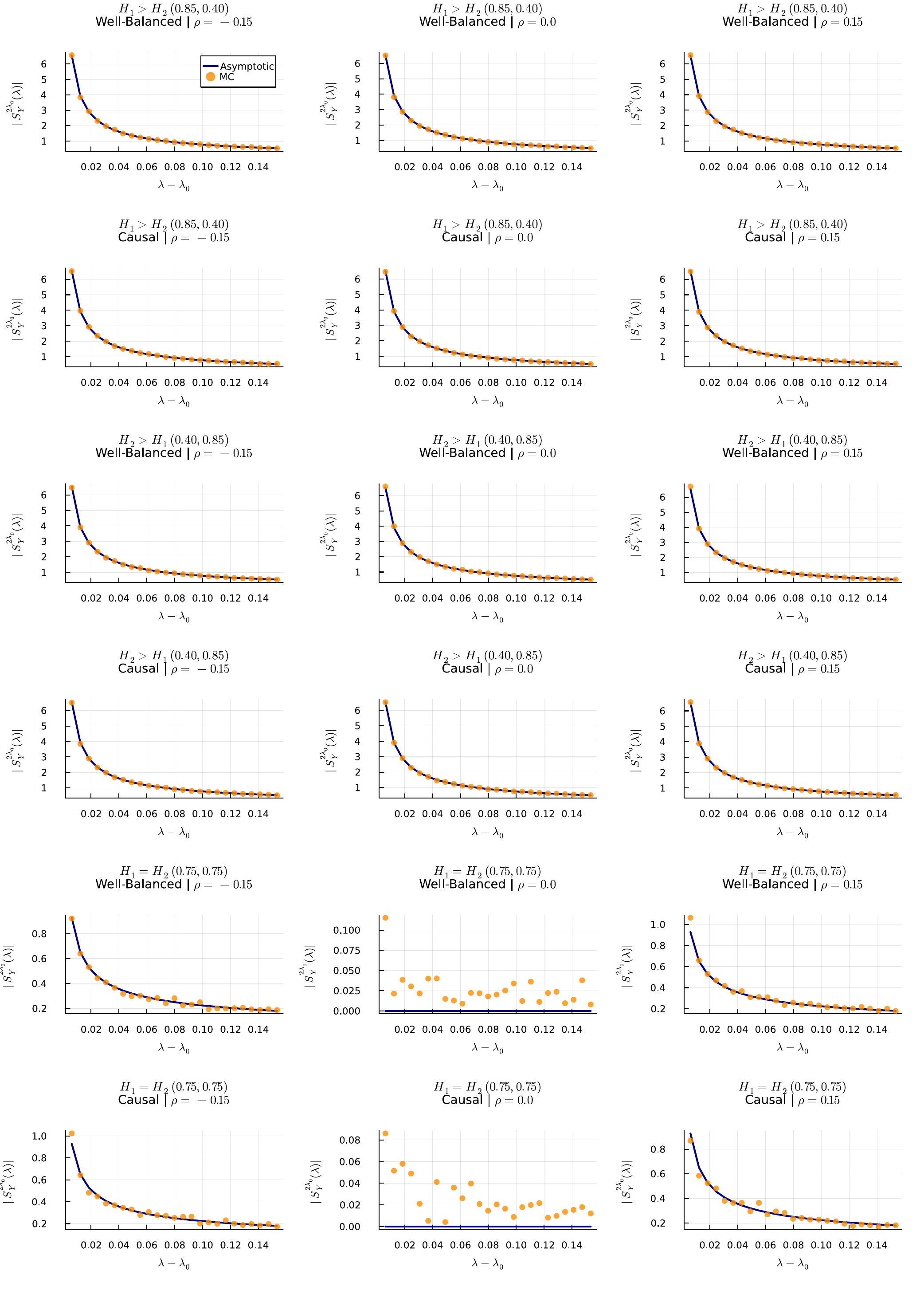}
        \caption{Comparison of theoretical asymptotic of absolute value cyclic spectrum at frequency $2\lambda_0$ given in Proposition \ref{csasympt} and its Monte Carlo simulated counterpart for different values of $\rho, H_1, H_2$, with causal and well-balanced 2d fGn as the underlying process, for
        $\lambda_0=0.1\pi$, $n=20$ and $50000$ Monte Carlo simulations.}\label{fig::csasympt}
\end{figure}
\section*{Acknowledgments}
The work was funded by the National Science Center in Poland as part of the OPUS grant: 2025/57/B/ST7/03507 „Fusion of an inspection robot-based, non-contact measurements for sound driven condition monitoring of rolling element bearings under varying speed, harsh noise and uncertain experimental conditions”.
\bibliography{bibliography}

\end{document}